\newtheorem{theorem}{Theorem}
\newtheorem{corollary}{Corollary}
\newtheorem{proposition}{Proposition}
\newtheorem{lemma}{Lemma}
\newtheorem{definition}{Definition}
\title{Computing Stable Models of Normal Logic Programs Without Grounding}
\author[Kyle Marple, Elmer Salazar and Gopal Gupta]
{KYLE MARPLE, ELMER SALAZAR and GOPAL GUPTA\thanks{This material is based upon 
		work supported by the National Science Foundation under Grant No. 
		1423419.}\\
University of Texas at Dallas, 800 W. Campbell Rd, Richardson, TX 75080}
\begin{document}

\maketitle

\begin{abstract}
	We present a method for computing stable models of normal logic programs, 
	i.e., logic programs extended with negation, in the presence of predicates 
	with arbitrary terms. Such programs need not have a finite grounding, so 
	traditional methods do not apply. Our method relies on the use of a 
	non-Herbrand universe, as well as coinduction, constructive negation and a 
	number of other novel techniques. Using our method, a normal logic program 
	with predicates can be executed directly under the stable model semantics 
	without requiring it to be grounded either before or during execution and 
	without requiring that its variables range over a finite domain. 
	As a result, our method is quite general and supports the use of terms 
	as arguments, including lists and complex data structures.  A prototype 
	implementation and non-trivial applications have been developed to demonstrate the feasibility of our method.
\end{abstract}

\begin{keywords}
	stable models, predicate, goal-directed
\end{keywords}

\section{Introduction}

The addition of negation to logic programming has been the subject of 
significant research over the last several decades. Both classical negation 
(where negative inference must be explicitly established) and \textit{negation 
as failure} (where \texttt{not p} is inferred if we fail to establish 
\texttt{p}) have been considered. Of these, negation as failure (NAF) has more 
interesting applications and has been widely researched.

Including NAF poses the following problem: while a logic program with no 
negation has a unique minimal model, logic programs with NAF can lead to 
multiple, incompatible models. Different semantics of negation arise depending 
on which of these models are deemed acceptable. These semantics include 
Fitting's semantics, the well-founded semantics of Ross, Gelder and 
Schlipf, the stable model semantics and a multitude of others 
\cite{fitting,wfs,asp,ullman-survey}. Of the various semantics which have been 
developed, the stable model semantics is widely regarded as the most expressive 
\cite{baral}.

However, given the current computation methods, the stable model semantics, and 
the answer set programming paradigm inspired by them, are computable only for 
programs which are finitely groundable. Thus, 
current methods compute the stable models of a program by first 
\textit{grounding} the program, i.e., instantiating each program variable with 
each of the values from its respective domain to derive ground clauses. The 
stable models are then computed using the grounded program. In most modern 
implementations, the ground program is suitably transformed and fed to a SAT 
solver. The models produced by the SAT solver will be the stable models of the 
original program \cite{assat,clasp}.

There are several problems with the grounding-based approach, the most 
significant being that only certain classes of programs are guaranteed to
have a finite grounding. A logic program with even a single unary term will 
have an infinite number of groundings due to the fact that, given a single 
unary function symbol f/1 and a single constant value \texttt{a}, the domain 
over which variables can range is infinite, consisting of \{\texttt{a, f(a), 
f(f(a)), f(f(f(a))), \dots }\}, resulting in infinite number of grounded 
clauses. For such a program to have a finite grounding, each variable must be 
restricted to a finite domain. However, even with finite domains, the grounding 
of a program may be exponentially larger than the original. Secondly, SAT-based 
and similar approaches compute the complete model of the 
grounded program. In reality, when solving practical problems, we are often 
only interested in part of the model. Finally, adding negation can lead to 
(parts of) the program becoming inconsistent. When this occurs, bottom-up 
methods that work with grounded programs will declare the whole program to be 
inconsistent (i.e., no model exists). In practice, it may 
be desirable to compute answers as long as these answers do not involve the 
inconsistent part of the program \cite{dcc}.

The last two problems described above can be resolved by designing 
goal-directed or query-driven execution methods for computing stable models. 
Such a method does not use a SAT solver, but rather, given a query, computes a 
partial stable model containing the query, if one exists. The computation is 
done in a manner very similar to SLD resolution \cite{lloyd87} for logic 
programs \cite{goalasp,dcc}. We have presented such a method previously, 
however, that method only works for propositional (grounded) 
programs\cite{goalasp,dcc}. In this paper, we build on our previous work to 
remove the need for grounding. Thus, we develop a query-driven method that can 
apply the stable model semantics to a normal logic program containing arbitrary 
terms as well as negation. This is accomplished without grounding the program, 
either before or during execution. It should be noted that, until recently, 
such query-driven procedures were considered impossible to develop even for 
propositional logic programs \cite{survey}.

The key insight in our work is that we use a non-Herbrand universe (in fact, an 
infinite \textit{superset} of the Herbrand universe) which allows us to 
guarantee properties required for the correctness of our method while still 
obtaining useful results. Additionally, coinductive logic programming is used 
to establish the consistency of mutually dependent (co-)recursive calls 
\cite{iclp2007}. \textit{Dual rules} are used both to simplify the handling of 
negation and to provide constructive negation.

It should be noted that top-down, goal-directed implementations which support 
predicates have been designed for the well-founded semantics by extending 
Prolog systems, for example, with \textit{tabling} \cite{xsb-wfs}. However, the 
well-founded semantics can be too weak for many applications, as it declares 
the truth value of many interesting atoms to be unknown. The stable model 
semantics is more expressive, but, to date, there has been no satisfactory 
solution to the problem of computing stable models of arbitrary predicate logic 
programs. Those solutions that have been proposed either greatly restrict the 
types of programs that can be handled or ground the program incrementally 
during execution \cite{dalpalu09,daotran12,lefevre09a,lefevre09b}.
Thus, our research makes important contributions:
\begin{itemize}
	\item It presents a top-down, query-driven method that can execute normal 
	logic programs with arbitrary predicates, thus solving a problem that was 
	hitherto considered unsolvable.
	\item Our method can be thought of as providing an operational semantics to
	normal logic programs with predicates (or, Prolog with negation) under the 
	stable model semantics. This can be combined with other advanced features 
	of 	logic programming such as constraints \cite{clpr} to develop 
	extremely powerful 	applications in an elegant manner, such as automated 
	planning under real-time constraints \cite{Bansal10}.
	\item The stable model semantics and answer set programming have been shown 
	to support powerful reasoning techniques such as default reasoning, 
	counterfactual reasoning, abductive reasoning, etc. These reasoning 
	capabilities now become available within Prolog. 
\end{itemize}

The only restrictions our method places upon programs are that operands of 
arithmetic operations must be ground, two \textit{negatively constrained 
variables} (discussed in Section \ref{sec:variables}) cannot be 
\textit{disunified} with each other and left recursion cannot lead to success. 
Of these, the last restriction can be removed via tabling \cite{swift}.
We prove the soundness of our method for all legal programs and a prototype 
implementation is available \cite{saspweb}. For convenience, we will refer to 
our method by the name given to its prototype implementation: \textbf{s(ASP)}.

We will begin with an overview of our method for the goal-directed 
execution of propositional logic programs and then examine the changes needed 
to adapt it to predicate logic programs containing arbitrary terms. Therefore, 
the remainder of the paper is structured as follows. Section 
\ref{sec:background} contains an overview of the stable model semantics as well 
as our propositional, query-driven execution method. Section \ref{sec:sasp} 
covers the expansion to predicate logic programs. Section \ref{sec:correct} 
contains proofs for the correctness for our method. Section \ref{sec:imp} 
provides a brief discussion of our prototype implementation along with examples 
of its execution. Section \ref{sec:related} discusses related and future work. 
Finally, in Section \ref{sec:conclusion}, we review the paper and draw 
conclusions.

\section{Background} \label{sec:background}

In this section, we will provide background information needed to understand 
both our method and its significance. We begin with an overview of the 
stable model semantics before examining our method for propositional programs, 
upon which our method for predicate programs is built.

\subsection{The Stable Model Semantics} \label{sec:stable}

The stable model semantics provides an intuitive way to represent non-monotonic 
or common sense reasoning in normal logic programs. The building blocks of such 
programs are literals.
\begin{definition} \label{def:literal}
	A \textbf{positive literal} is an atom or compound term, optionally 
	prefixed with a hyphen (`-'). A negated or \textbf{negative literal} is a 
	positive literal preceded by \texttt{not}. The basic term \textbf{literal} 
	encompasses both positive and negative literals.
\end{definition}
% would defining functor and arity be overkill?
As we are dealing with predicate programs, it should be noted that, in the 
case of compound terms, literals with the same functor, arity (number 
of arguments) and negation are treated as instances of the same literal. For 
instance, \texttt{p(X)} and \texttt{p(1)} are both instances of the literal 
\texttt{p/1}, while \texttt{not p(1)} is an instance of \texttt{not p/1}.

\begin{definition} \label{def:normal}
	A \textbf{normal logic program} is a program consisting of clauses of the 
	following forms:\\
	\texttt{
		\indent p :- q$_1$, ..., q$i$, ..., q$_m$,\\
		\indent~~~~~not r$_1$, ..., not r$_j$, ..., not r$_n$.\\
		\indent p :- q$_1$, ..., q$i$, ..., q$_m$,\\
		\indent~~~~~not r$_1$, ..., not r$_j$, ..., not r$_n$,\\
		\indent~~~~~not p.\\
		\indent :- q$_1$, ..., q$i$, ..., q$_m$,\\
		\indent~~~not r$_1$, ..., not r$_j$, ..., not r$_n$.\\
	}
	\noindent where \texttt{m} $\geq$ 0, \texttt{n} $\geq$ 0, $0 \leq i \leq m$ 
	and $0 \leq j \leq n$. Each \texttt{p, q$_i$} and \texttt{not r$_j$} is a 
	literal.
	% defintions for people who want to be intentionally obtuse:
	The literal to the left of the consequence operator (\texttt{:-}) is the 
	\textbf{head} of the clause, while the literals to the right, also referred 
	to 	as \textbf{goals}, form the \textbf{body}. A \textbf{rule} is the set 
	of all clauses in a program with the same literal as their head. Either the 
	head or body of a rule may be empty, resulting in \textbf{headless rules} 
	and \textbf{facts}, respectively.
\end{definition}
\noindent Unless otherwise stated, all programs referenced in this paper will 
be normal logic programs. Two forms of negation are permitted: \textit{negation as failure} and \textit{classical negation}.
\begin{definition} \label{def:naf}
	Under \textbf{negation as failure (NAF)}, \texttt{not p} succeeds iff 
	\texttt{p} fails.
\end{definition}
\noindent Negative literals in the body of a rule are negated using NAF. The 
optional hyphen prefix indicates classical negation.
\begin{definition} \label{def:classical}
	Under \textbf{classical negation}, \texttt{-p} and \texttt{p} cannot both be true, but it is possible for both to be false. Additionally, while \texttt{not p} may only appear in the body of a clause, \texttt{-p} may be used as the head of a clause, e.g.\\
	\texttt{\indent -p :- q, not r.}\\
	Thus, the difference between classical negation and NAF is that classical negation can be used to define explicit rules for establishing falsehood.
\end{definition}
\noindent Both NAF and classical negation carry the implicit constraint that a 
call and its negation cannot both succeed. That is, the success of \texttt{p} 
excludes \texttt{not p} and \texttt{-p}, and the success of either \texttt{not 
p} or \texttt{-p} will exclude \texttt{p}.

From an external standpoint, only positive literals may appear in the head of a 
rule. Internally generated dual rules, discussed in Section \ref{sec:conneg}, 
have negated literals as their heads, but such rules can never be supplied in 
an input program. Note, however, that classical negation does not affect 
whether a literal is positive or negative. Thus \texttt{-p} is considered a positive literal, with \texttt{not -p} its negation.

The stable model semantics of a normal logic program \texttt{P} is defined 
in terms of the stable models of the program \texttt{ground(P)}, obtained by 
grounding the variables in \texttt{P} (note that the grounding is not required 
to be finite). The stable models of \texttt{ground(P)} are traditionally 
identified using the Gelfond-Lifschitz method \cite{baral}. At the most basic 
level, the Gelfond-Lifschitz method is similar to our own: a candidate model is 
generated and then tested to ensure that it is a stable model.

\begin{definition} \label{def:canmodel}
	A \textbf{candidate stable model} is a set of literals which is assumed to 
	be a stable model of a given program.
\end{definition}

\begin{definition}
	\textbf{Gelfond-Lifschitz method (GL Method)}: Given a grounded 
	program \texttt{P} and a candidate stable model \texttt{M}, a residual 
	program \texttt{R} is obtained by applying the following transformation 
	rules: 
	\begin{enumerate}
		\item For all literals \texttt{L $\in$ M}, delete all clauses in 
		\texttt{P} which have \texttt{not L} in their body.
		\item Delete all remaining negated goals (of the form \texttt{not 
		X}) from the bodies of the remaining clauses.
	\end{enumerate}
	Next, the least fixed-point, \texttt{F}, of the residual program 
	\texttt{R} is computed. If \texttt{F $=$ M}, then \texttt{M} is a 
	\textbf{stable model} of \texttt{P}.
\end{definition}

The non-monotonicity of the stable model semantics is captured by the last two 
rule forms given in Definition \ref{def:normal}. Consider the rule:
\begin{verbatim}
p :- q, not p.
\end{verbatim}
Following the Gelfond-Lifschitz method as outlined above, this clause 
restricts \texttt{q} (and \texttt{p}) to not be in the stable model (unless 
\texttt{p} happens to be in the model via another clause, in which case, due to 
presence of \texttt{not p}, this clause will be removed while generating the 
residual program). Note that even though a program can have other rules to 
establish that \texttt{q} is in the stable model, adding the rule above forces 
\texttt{q} to not be in the model unless \texttt{p} succeeds through another 
clause, thus making the stable model semantics non-monotonic.

\begin{figure}
	\figrule
	\begin{center}
		\begin{verbatim}
		% Given 3 birds, which can fly?
		penguin(sam).       % sam is a penguin
		wounded_bird(john). % john is wounded
		bird(tweety).       % tweety is just a bird
		
		% penguins and wounded birds are still birds
		bird(X) :- penguin(X).
		bird(X) :- wounded_bird(X).
		
		% penguins and wounded birds are abnormal
		ab(X) :- penguin(X).
		ab(X) :- wounded_bird(X).
		
		% birds can fly if they are not abnormal
		flies(X) :- bird(X), not ab(X).
		
		% explicit closed world assumptions
		-flies(X) :- ab(X).
		-flies(X) :- -bird(X).
		
		-wounded_bird(X) :- not wounded_bird(X).
		-penguin(X) :- not penguin(X).
		-ab(X) :- not ab(X).
		-bird(X) :- not bird(X).
		\end{verbatim}
	\end{center}
	\caption{A version of the classic ``Tweety Bird'' problem with a 
		combination of classical negation and NAF.}
	\label{fig:tweety}
	\figrule
\end{figure}

Sample normal logic programs are shown in Figures \ref{fig:tweety}, 
\ref{fig:queens} and \ref{fig:hamiltonian}.
The program in Figure \ref{fig:tweety} shows an example of default reasoning 
with the \textit{closed world assumption (CWA)}. Under the CWA, if we do not 
know a piece of information, we infer it to be false. With this program, if we 
pose the query \texttt{?- -flies(X)}, we should get the answers \texttt{X = 
sam} and \texttt{X = john}.
%Finally, Figure 
%\ref{fig:queens} shows the code for the N-queens problem. Notice that our 
%encoding of the program manipulates a list, and thus no other system can 
%execute it directly under the stable model semantics.

\subsection{Our Method for Propositional Programs} \label{sec:grounded}

Now that we have introduced the basics of the stable model semantics, we can 
discuss our method for goal-directed execution of propositional 
programs, which can be viewed as a stepping stone to our method for 
predicate programs. Our method for propositional programs has been proven sound 
and complete with respect to the 
Gelfond-Lifschitz method and forms the core of the Galliwasp ASP 
system (\url{http://galliwasp.sourceforge.net})
\cite{galliwasp,galliwaspweb,kylethesis}. The two key aspects of our 
propositional method are its handling of rules containing odd loops over
negation and its use of {\it coinduction}.

Both our propositional and predicate methods categorize rules by examining the 
call graph and checking the number of negations between any recursive calls.
\begin{definition}
	A program's \textbf{call graph} is a directed, weighted graph with one node 
	for each positive literal in the program. Edges are drawn from rule heads 
	to their goals. While only positive literals are used as nodes, negation is 
	preserved using weighted edges: edges corresponding to a positive literal 
	are given a weight of 0, while edges corresponding to negative literals are 
	given a weight of 1. To keep track of which rules are part of a given 
	cycle, each edge is also paired with an ID indicating which rule produced 
	it.
\end{definition}
\noindent First, the call graph is traversed to identify any odd loops over 
negation.
\begin{definition}
	An \textbf {odd loop over negation (OLON)} occurs when a cycle in the call 
	graph contains an odd number of negations.
\end{definition}
\noindent Each rule in the program is then classified using the following 
definitions:
\begin{definition}
	An \textbf{OLON rule} is a rule which can be called as part of an OLON.
\end{definition}
\begin{definition}
	\textbf{Ordinary rules} have at least one path in the call graph which will 
	not result in an odd loop over negation.
\end{definition}
\noindent Note that rules with an empty head are always treated as OLON rules.  
Additionally, a rule can be both an OLON rule and an ordinary rule via 
different paths in the call graph.

OLON rules are important to the stable model semantics because they have the 
ability to place global constraints on a program. These constraints must be 
satisfied by any stable model, even if the OLON is never reached during 
execution. Consider the following two forms of OLON rules:
\begin{verbatim}
p :- B, not p.
:- B.
\end{verbatim}
where \texttt{B} is some conjunction of goals. For the first rule, any stable 
model must satisfy one of two cases: (i) \texttt{p} is added to the model by 
another rule in the program, or (ii) at least one goal in \texttt{B} must fail. 
That is, the rule imposes the global constraint \texttt{p} $\vee$ \texttt{not 
B}. For headless rules (a shorthand for the second form), the second case must 
always hold, imposing the global constraint \texttt{not B}.

Programs are executed using a modified 
form of coinduction extended with negation. First, a query is extended to 
enforce the constraints imposed by any OLON rules in the program. Then, this 
query is executed using a modified form of coinductive SLD resolution 
\cite{iclp2007}.

\begin{definition}
	Under \textbf{SLD resolution} \cite{lloyd87}, query is executed by calling 
	each goal in 
	turn. Calls are added to the call stack and expanded by selecting clauses 
	whose head unifies with the call and recursively calling the goals in the 
	body. A call succeeds when this expansion becomes empty (the call or its 
	children unify with facts). If no expansion is possible, backtracking 
	occurs: execution is rolled back to the previous expansion operation and 
	the call is expanded using the next matching clause. A call fails when no 
	matching clauses remain. Execution succeeds when every goal of the query 
	has succeeded and fails when both expansion and backtracking are impossible.
\end{definition}

\begin{definition}
	\textbf{Coinductive SLD resolution (co-SLD resolution)} expands 
	SLD-resolution by storing each succeeding call in a set called the 
	\textbf{coinductive hypothesis set (CHS)}. If a call unifies with a call 
	that is already in the CHS, or with an ancestor in the call stack, the call 
	is allowed to \textit{coinductively succeed} without further expansion 
	\cite{iclp2007}.
\end{definition}

Under the stable model semantics, the condition for coinductive success via the 
call stack is modified such that only cycles containing \textit{even loops} may 
succeed. This modification is necessary because the stable model semantics 
requires that \textit{positive loops} fail, while traditional coinduction would 
allow them to succeed.

\begin{definition} \label{def:even}
	An \textbf{even loop} occurs when a recursive call is encountered with an 
	even, non-zero number of negations between the call and its ancestor in the 
	call stack.
\end{definition}
\begin{definition} \label{def:positive}
	A \textbf{positive loop} occurs when a recursive call is encountered with 
	no negations between the call and its ancestor in the call stack.
\end{definition}

Our methods also add the idea of \textit{coinductive failure}, in which 
failure and backtracking occur if the negation of a call unifies with a call in 
the call stack or CHS. This ensures that the CHS remains consistent, as 
\texttt{p} and \texttt{not p} can never be present at the same time.

Under our methods, the CHS also serves as a \textit{candidate partial 
model}, or candidate model for simplicity. These are conceptually the same as 
the candidate stable models described in Definition \ref{def:canmodel}, except 
that our method focuses on finding subsets of stable models rather than 
complete models (see Definition \ref{def:partial}).

Candidate partial models are generated by executing the ordinary rules in a 
program and then testing to ensure that they satisfy any constraints imposed by 
OLON rules. This testing is handled by the \textit{non-monotonic reasoning 
	check}.
\begin{definition}
	The \textbf{non-monotonic reasoning check (NMR check)} is a special rule 
	responsible for applying the constraints imposed by OLON rules. A call to 
	the NMR check is automatically appended to each query.
\end{definition}

\noindent For each OLON rule in a program, a ``sub-check'' rule with a unique 
head is created to apply the corresponding global constraint (discussed earlier 
in the section). The head of each sub-check rule is then added to the body of 
the NMR check. Sub-check rules are created by adding the negation of the 
corresponding OLON rule's head to the body (if not already present) and then 
negating the rule. Each clause is processed 
independently, so no modification is needed if a goal appears in multiple OLONs 
or as the head of multiple OLON rules. For instance, the rules
\begin{verbatim}
p :- B, not p.
p :- not q, not p.
p :- q, r, not p.
\end{verbatim}
would produce the sub-check rules
\begin{verbatim}
chk_p1 :- not B.
chk_p1 :- p.
chk_p2 :- q.
chk_p2 :- p.
chk_p3 :- not q.
chk_p3 :- not r.
chk_p3 :- p.
\end{verbatim}
As a result, if the NMR check succeeds, the candidate partial model in the CHS 
must satisfy every OLON rule in the program. Correspondingly, if a program or 
candidate partial model is inconsistent, the NMR check will trigger failure and 
backtracking. For example, a program containing the rule
\begin{verbatim}
:- not c.
\end{verbatim}
where \texttt{c} does not appear anywhere else in the program will have no 
stable model. The method enforces this by creating the NMR sub-check
\begin{verbatim}
chk :- c.
\end{verbatim}
Since the program contains no rules for \texttt{c}, the check is unsatisfiable 
and execution will eventually fail.

Upon successful execution of both the query and NMR check, the CHS will be 
returned as a partial stable model.
\begin{definition} \label{def:partial}
	A \textbf{partial stable model} is a set of literals which is guaranteed to 
	be a subset of some stable model of the program \cite{goalasp}.
\end{definition}

\section{The s(ASP) Method} \label{sec:sasp}

Now that we have introduced the stable model semantics and our goal-directed 
method for propositional programs, we can discuss our predicate method. We 
will begin with some fundamentals before looking at several key aspects of the 
method itself. Finally, we will give an overview of the completed method.

\subsection{s(ASP) Fundamentals} \label{sec:fundamentals}

Before we can discuss the actual execution of our method, we must first 
introduce a number of core concepts. These include s(ASP)'s universe, its 
system of variables and constraints, its use of constructive negation and its 
restrictions on legal programs.

\subsubsection{The s(ASP) Universe} \label{sec:universe}

One of the defining aspects of s(ASP) is its universe. While most logic 
programming semantics utilize the Herbrand universe, it is insufficient for our 
purposes. While the Herbrand universe may be finite, s(ASP) explicitly requires 
that its universe always be infinite, the reasons for which will be discussed 
in Section \ref{sec:variables}. To ensure that this property always holds, we 
rely on a universe defined as follows:
\begin{definition} \label{def:universe}
	The \textbf{s(ASP) universe}, \textbf{$U_S$}, is an infinite, proper 
	\textit{superset} of the Herbrand universe.
\end{definition}

Formally, this is achieved by extending the language of propositional 
stable models with an infinite number of special constants in the manner of 
Shoenfield \cite[p. 46]{shoenfield} Shoenfield works with first-order 
mathematical logic rather than non-monotonic logic programming, but translation 
between the two is straightforward: Shoenfield's definition of a first order 
theory corresponds to a system consisting of a language, its universe, a 
semantics, and a program to be executed. As Shoenfield proves, a first order 
theory whose language is extended using his special constants technique is a 
``conservative extension'' of the original theory. That is, a formula (rule) 
from the original theory will hold in the extended theory iff it holds in the 
original theory. In simpler terms, \textit{our extension of the universe does 
not affect the correctness of programs which have been grounded over the 
Herbrand Universe.} Thus, for simplicity, subsequent references to the stable 
model semantics of propositional programs and the GL method will refer to 
variants which have been extended to use the s(ASP) universe.

\subsubsection{Variables and Constraints} \label{sec:variables}

The most obvious step in supporting predicate programs is to accommodate 
variables. Additionally, the constructive negation employed by s(ASP) relies on 
extending variables with simple constraints. In turn, unification and 
\textit{disunification} must be modified to accept such variables.

Instead of the traditional states of bound and unbound, s(ASP) variables can be 
either bound or negatively constrained.
\begin{definition} \label{def:convar}
	A \textbf{negatively constrained variable} is associated with a 
	\textit{prohibited value list}---a list of prohibited values---and 
	represents the set of all values in the s(ASP) universe which are not in 
	this list.
\end{definition}

Thus, if the prohibited value list of a variable {\tt X} contains the
constants {\tt a} and {\tt b}, then {\tt X} may be bound to any value 
\textit{except} {\tt a} and {\tt b}. Unbound variables are treated as a special 
case of negatively constrained variables in which the prohibited value list is 
empty. Note that values in a prohibited value list need not be ground, but must 
be at least partially bound. The presence of a negatively constrained variable
in a prohibited value list would violate our restriction against disunifying 
two negatively constrained variables, discussed further in Section 
\ref{sec:restrictions}.

As discussed in Section \ref{sec:universe}, the s(ASP) universe is an infinite 
superset of the Herbrand universe. The discrepancy between variables with
finite domains and variables defined in terms of the s(ASP) universe can lead 
to what we will refer to as empty variables.
\begin{definition} \label{def:empty}
	A variable which is empty with respect to some domain \texttt{D}, referred 
	to as an \textbf{empty variable} for simplicity, is a negatively 
	constrained variable whose prohibited value list contains every element of 
	\texttt{D}.
\end{definition}

This brings us to the reason that s(ASP) requires a special universe. Clearly, 
were a program to be grounded over some \texttt{D}, no value could be assigned 
to such a variable. However, \textit{it is impossible to ground a program over 
the s(ASP) universe}. Because it is an infinite, proper superset of the 
Herbrand universe, the s(ASP) universe will always contain elements which are 
not present in a given grounding. This provides us with several important 
properties:
\begin{proposition}[Properties of s(ASP) Variables] \label{th:varproperties}
	\begin{enumerate}
		\item It is impossible for a legal program to negatively constrain a 
		variable against every element of the s(ASP) universe.
		\item A variable can never be empty with respect to the s(ASP) universe 
		itself.
		\item The domain of a negatively constrained variable defined in terms 
		of the s(ASP) universe will always be \textit{infinite}.
	\end{enumerate}
\end{proposition}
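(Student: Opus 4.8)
The plan is to reduce all three parts to a single structural lemma about where prohibited values can come from, and then exploit the fact that $U_S$ is a \emph{proper} superset of the Herbrand universe (Definition \ref{def:universe}). The central observation is that the prohibited value list of any variable is populated only by disunification steps performed during execution, and every value introduced by such a step is a term built from the functors and constants that actually occur in the finite program text. Consequently, every prohibited value lies in the Herbrand universe $U_H$ of the program. Since the special constants supplied by Shoenfield's construction lie in $U_S \setminus U_H$ and can never be named in the program, they can never be placed on any prohibited value list.

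First I would state and justify this lemma carefully, since it is the load-bearing step. The justification relies on the mechanics of constructive negation and disunification referenced from Definition \ref{def:convar}: a value is only ever prohibited when a variable is disunified against a term arising from a rule head or body, and the restriction forbidding the disunification of two negatively constrained variables guarantees that each such term is at least partially bound over the program signature rather than being a fresh universe element. By induction on the execution, no special constant is ever generated, so the prohibited value list is always a subset of $U_H$.

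Part (1) is then immediate: because $U_S$ is a proper superset of $U_H$, the set $U_S \setminus U_H$ is nonempty---indeed infinite, as Shoenfield adds infinitely many special constants---and none of its elements can occur on any prohibited value list, so no legal program can constrain a variable against every element of $U_S$. Part (2) follows by instantiating Definition \ref{def:empty} with $D = U_S$: a variable is empty with respect to $U_S$ iff its prohibited value list contains all of $U_S$, which part (1) has just ruled out. For part (3), the domain of a negatively constrained variable is $U_S$ minus its prohibited value list (Definition \ref{def:convar}); since that list omits the whole infinite set $U_S \setminus U_H$ of special constants, the domain contains this infinite set and is therefore infinite.

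The main obstacle I anticipate is the lemma itself rather than the three corollaries it yields, because its proof depends on details of disunification and constructive negation that are only introduced later; making it airtight requires verifying that \emph{every} route by which a value can enter a prohibited value list produces a term over the program's signature, and in particular that the program's own syntax can never name one of Shoenfield's special constants. Once that closure property is secured, the cardinality arguments for the three parts are routine.
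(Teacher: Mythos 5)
Your proposal is correct and takes essentially the same route as the paper, whose proof is a one-line appeal to Definition \ref{def:universe}, Definition \ref{def:empty} and the restriction against disunifying two negatively constrained variables; your closure lemma (prohibited values can only be terms over the program's signature, so Shoenfield's special constants are never excluded) makes explicit precisely the step the paper leaves implicit. One small imprecision: prohibited values need only be \emph{partially} bound, so they are not literally elements of the Herbrand universe, but your argument survives because each such value has an outermost symbol from the program's signature and therefore still cannot match any of the infinitely many special constants.
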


\begin{proof}
	All of these properties may be derived directly from Definition 
	\ref{def:universe}, Definition \ref{def:empty} and our restriction against 
	disunifying two negatively constrained variables (Section 
	\ref{sec:restrictions}).
\end{proof}

As a result of these properties, variables which are empty with respect to the 
Herbrand universe, or some subset of it, do not trigger failure. For example, 
consider the following program, with rules for negation (termed dual rules, 
discussed in Section \ref{sec:conneg}) added for clarity:
\begin{verbatim}
d(1).
p(X) :- not d(X).
not d(X) :- X \= 1.
not p(X) :- d(X).
\end{verbatim}
If we assume \texttt{D} to be \{1\}, then a grounding of this program would be
\begin{verbatim}
d(1).
p(1) :- not d(1).
not d(1) :- 1 \= 1.
not p(1) :- d(1).
\end{verbatim}
Given the query \texttt{?- p(X).} the grounded program will always fail. 
However, execution of the predicate program with s(ASP) will succeed, 
returning the solution \{ \texttt{p(X)}, \texttt{not d(X)} (\texttt{X} 
\textbackslash= \texttt{1}) \}.

\textit{While results involving empty variables may be different from those of 
a corresponding grounded program, the two will never be inconsistent with 
each other.} Instead, they simply convey different, equally correct 
information. The failure of the grounded program indicates that no solution 
exists for the domain \{1\}, while the success of the predicate program 
indicates that a solution would exist if the domain were to be extended.

\subsubsection{Constructive Unification and Disunification} 
\label{sec:unification}

Now that we have introduced negatively constrained variables, unification and 
disunification must be extended to work with them. To differentiate the 
modified versions from the originals, we will refer to them as 
\textit{constructive} unification and disunification. For cases where neither 
argument contains a negatively constrained variable, the constructive 
algorithms are identical to the traditional ones.

\begin{definition}
	The cases for \textbf{constructive unification} are as follows:
	\begin{itemize}
		\item Constructive unification of a negatively constrained variable with
		a non-variable value will succeed if the non-variable value does not 
		constructively unify with any element in the variable's prohibited 
		value list.
		\item Constructive unification of two negatively constrained 		
		variables will always succeed, setting their shared prohibited value 
		list to the union of their original lists.
		\item Constructive unification of two compound terms is performed 
		recursively: first, the functors and arities are tested, then each pair 
		of corresponding arguments is constructively unified.
		\item In cases where neither argument contains a negatively constrained 
		variable, the result is identical to that of traditional unification.
	\end{itemize}
\end{definition}

Thus, if the prohibited value list of {\tt X} contains {\tt a}, and the 
prohibited value list of {\tt Y} contains {\tt b}, then unifying them, {\tt 
X = Y} where {\tt =} is the unification operator, will extend the prohibited 
value list of both variables to {\tt [a, b]}. After the operation, any 
subsequent attempt to unify either variable with \texttt{a} or \texttt{b} will 
fail.

\begin{definition}
	\textbf{Constructive disunification} is the dual of constructive 
	unification with one exception: in accordance with the restrictions given 
	in Section \ref{sec:restrictions}, constructive disunification of two 
	negatively constrained variables will produce an error. The remaining cases 
	are as follows:
	\begin{itemize}
		\item Constructive disunification of a negatively constrained variable 
		and a non-variable value will always succeed, adding the non-variable 
		value to the variable's prohibited value list.
		\item Constructive disunification of two compound terms is performed 
		by first testing functors and arities. If either of these does not 
		match, the operation succeeds deterministically. Otherwise, the pairs 
		of corresponding arguments are disunified recursively. 
		\textit{Non-deterministic} success occurs as soon as the operation 
		succeeds for a pair of arguments, with subsequent pairs tested upon 
		backtracking.
		\item In cases where neither argument contains a negatively constrained 
		variable, the result is identical to that of traditional disunification.
	\end{itemize}
\end{definition}

Given a variable {\tt X} whose prohibited value list contains \texttt{a}, 
disunifying \texttt{X} with a constant {\tt c}, i.e., solving {\tt X} 
\textbackslash= {\tt c} where \textbackslash\texttt{=} represents the 
disunification operator, will extend the prohibited value list of {\tt X} to 
{\tt [a,c]}, i.e., {\tt X} cannot be bound to either {\tt a} or {\tt c}. Under 
our program restrictions, discussed further in Section \ref{sec:restrictions}, 
the disunification of two negatively constrained variables is considered 
illegal. There is, however, an exception to this rule involving variables in 
even loops, discussed in Section \ref{sec:even}.

Note that constructive disunification of compound terms has the potential to be 
non-deterministic. Consider the following statement:
\begin{verbatim}
a(X, Y) \= a(1, 2)
\end{verbatim}
This operation can succeed for both \texttt{X} \textbackslash= 1 and \texttt{Y} 
\textbackslash= 2, but applying both constraints at the same time would result 
in incompleteness, excluding cases such as \texttt{a(1, 3)}. To preserve 
correctness, the operation will succeed non-deterministically, first succeeding 
for \texttt{X} \textbackslash= 1 and then for \texttt{Y} \textbackslash= 2 upon 
backtracking.

By construction, constructive unification and disunification are sound and 
complete with respect to their traditional counterparts for all legal cases. 
However, as constructive disunification of compound terms may be 
non-deterministic, backtracking may be required to produce all values.

\subsubsection{Constructive Negation and Dual Rules} \label{sec:conneg}

One of the cornerstones of s(ASP) is constructive negation. The stable model 
semantics relies on negation as failure (NAF), which normally returns no 
bindings or other information from a failed call. However, a goal-directed 
implementation of the stable model semantics needs to know \textit{why} a call 
failed rather than simply that it failed. With constructive negation, negated 
calls can be treated the same way as non-negated calls, binding variables and 
contributing to the model. To implement constructive negation, s(ASP) computes 
the \textit{completion} of the program, extending it with rules for negative 
information \cite{lloyd87}. These new rules are called \textit{dual rules}, as 
they represent the negations, or duals, of the rules in the original program 
\cite{alferes}.
\begin{definition} \label{def:dual}
	\textbf{Dual rules} are rules for the negation of a predicate which will 
	succeed whenever a call to the original predicate would fail under NAF.
\end{definition}
In cases without variables or side effects, a predicate's dual rule can be 
computed by simply applying DeMorgan's laws:
\begin{align}
\neg(P \wedge Q) \Leftrightarrow (\neg P) \vee (\neg Q) \\
\neg(P \vee Q) \Leftrightarrow (\neg P) \wedge (\neg Q)
\end{align}
For example, given a predicate \texttt{p} with the clauses
\begin{verbatim}
p :- a, not b.
p :- r.
\end{verbatim}
\noindent the dual of \texttt{p} would be
\begin{verbatim}
not p :- np1, np2.
np1 :- not a.
np1 :- b.
np2 :- not r.
\end{verbatim}

Although the above method works for propositional programs, it requires modification 
to work with predicate programs. To begin with, we will examine two cases that 
must be addressed to account for the introduction of unification. First, 
unification is performed between each of a call's arguments and the 
corresponding arguments in the head of the selected clause. For example, 
consider a call \texttt{t(X, 2)} and a clause with the head \texttt{t(A, B)}. 
When expanding the call using this clause, \texttt{X} will be unified with 
\texttt{A} and \texttt{B} will be unified with 2. Second, the 
presence of variables prevents the goals in a dual rule from being considered 
independently. Consider the clause
\begin{verbatim}
p(X) :- q(X), r(X).
\end{verbatim}
\noindent Our above method would produce the dual
\begin{verbatim}
not p(X) :- not q(X).
not p(X) :- not r(X).
\end{verbatim}
\noindent However, \texttt{q(X)} can modify the value of \texttt{X}, affecting 
the outcome of \texttt{r(X)} and, by extension, \texttt{not r(X)}. Correct dual 
rules must account for this. Finally, we will look at how to handle the 
implicit quantifiers on each variable in a clause.

While the unification that occurs between a call and a rule's head when 
expanding a call is easy to overlook, dual rules must account for it. Since a 
dual rule must succeed when the original would fail, it must also succeed when 
these unification operations would fail. We address this by abstracting such 
operations to remove them from the head and make them explicit. Before 
computing the dual of a clause, the arguments in the head are examined. If a 
variable occurs more than once in the head of a clause, each occurrence after 
the first is replaced with a new variable and a goal unifying the original 
variable and the new one is added to the beginning of the rule body. If an 
argument is a non-variable, it is replaced with a new variable and a goal 
unifying the new variable to the non-variable value is added to the beginning 
of the rule body. So, the rule
\begin{verbatim}
t(A, A).
\end{verbatim}
would first be transformed into
\begin{verbatim}
t(A, B) :- A = B.
\end{verbatim}
which would produce the dual rule
\begin{verbatim}
not t(A, B) :- A \= B.
\end{verbatim}
The operations are the same for compound terms, except that they are performed 
recursively on the arguments of the terms.

Unification also prevents us from considering each goal in a clause 
independently when constructing duals. Because each goal has the potential to 
alter any non-ground variable that it is called with, each goal in a clause may 
depend on the goals called before it. Therefore, before calling the negated 
goal itself, each dual clause must call any goals on which the current goal 
depends. While it should be possible to determine the dependencies of a given 
goal, our current strategy is to simply include in each clause every goal prior 
to the current one. This means that the rule
\begin{verbatim}
p(X, Y) :- not q(X), t(Y, Y).
\end{verbatim}
would produce the dual rule
\begin{verbatim}
not p(X, Y) :- q(X).
not p(X, Y) :- not q(X), not t(Y, Y).
\end{verbatim}

Our algorithm for computing dual rules requires one more modification to 
properly handle predicate programs. This is due to the fact that variables 
have implicit quantifiers: variables in the head of a clause are universally 
quantified, while variables in the body (body variables) are existentially 
quantified. Thus, the clause
\begin{verbatim}
q(X) :- not p(X, Y).
\end{verbatim}
is equivalent to the formula
\begin{equation*}
\forall X(q(X) \leftarrow \exists Y \neg p(X, Y))
\end{equation*}
\noindent Duals of clauses that contain variables must negate these quantifiers 
as well. The universal and existential quantifiers are duals of each other, so 
the dual of the above formula is
\begin{align*}
&\forall X(\neg(q(X)) \leftarrow \neg(\exists Y \neg p(X, Y))) \\
\equiv &\forall X(\neg q(X) \leftarrow \forall Y \neg(\neg p(X, Y))) \\
\equiv &\forall X(\neg q(X) \leftarrow \forall Y p(X, Y))
\end{align*}

\begin{definition}
	\textbf{Body variables} are variables which occur in the body of a clause, 
	but not in its head.
\end{definition}
\noindent As this example shows, any body variables in a clause will be 
universally quantified in the clause's dual. That is, for some binding of the 
head variables, the clause must succeed for all values (or combinations of 
values) of the body variables. To create such duals, we developed a special 
for-all mechanism which relies on negatively constrained variables.
\begin{definition}
	A \textbf{for-all} is a s(ASP) goal of the form \texttt{forall(V, G)} where 
	\texttt{V} is a variable and \texttt{G} is a goal. A for-all succeeds if 
	\texttt{G} succeeds for all values of \texttt{V}.
\end{definition}
These for-alls are used to create dual rules that universally quantify body 
variables from the original rule. First the dual rules for a clause are 
computed as if no body variables were present, except that the predicate in the 
head is replaced with a new, unique one. Next, the body variables are added to 
the head of each dual. Finally, a clause for the dual is created containing 
a for-all over the new predicate. So, the rule
\begin{verbatim}
q(X) :- not p(X, Y).
\end{verbatim}
would produce the dual rule
\begin{verbatim}
not q(X) :- forall(Y, nq1(X, Y)).
nq1(X, Y) :- p(X, Y).
\end{verbatim}
Note that multiple body variables can be handled by nesting for-alls. For 
example \texttt{forall(X, forall(Y, p(X, Y)))} will succeed if \texttt{p(X, Y)} 
succeeds for all values of \texttt{X} and \texttt{Y}.

At runtime, a for-all is executed by calling \texttt{G} with \texttt{V} 
unbound. If \texttt{G} succeeds, the value of \texttt{V} is checked. If 
\texttt{V} is still unbound, \texttt{G} is added to the CHS and the for-all 
succeeds. If \texttt{V} is bound, failure and backtracking take place. However, 
if \texttt{V} is negatively constrained, \texttt{G} is added to the CHS and 
then called again for each value in \texttt{V}'s prohibited value list, 
substituting the constrained value for \texttt{V}. If \texttt{G} succeeds for 
every constrained value, the for-all succeeds and \texttt{G} is added to the 
CHS with \texttt{V} unbound. Consider the following program with dual rules 
added:
\begin{verbatim}
p :- not q(X).
q(Y) :- Y = a.
q(Y) :- Y \= a.

not p :- forall(X, np1(X)).
np1(X) :- q(X).

not q(Y) :- nq1(Y), nq2(Y).
nq1(Y) :- Y \= a.
nq2(Y) :- Y = a.
\end{verbatim}
\noindent and the query \texttt{?- not p.} The clause for \texttt{not p} will 
execute \texttt{forall(X, np1(X))}, which will in turn execute \texttt{q(X)}. 
The first clause of \texttt{q(Y)} will succeed, unifying \texttt{Y} with 
\texttt{a}. However, because \texttt{X} will be bound when \texttt{np1(X)} 
succeeds, execution will fail and backtrack, forcing the second clause for 
\texttt{q(Y)} to be selected. This clause will succeed, adding \texttt{a} to 
\texttt{Y}'s prohibited value list. Because \texttt{X} is now constrained when 
\texttt{np1(X)} succeeds, the forall will call \texttt{np1(X)} substituting 
each member of \texttt{X}'s prohibited value list for \texttt{X} in turn. In 
this case, the only extra call will be \texttt{np1(a)}. The call to 
\texttt{q(a)} will succeed via the first clause for \texttt{q(Y)}, causing the 
call \texttt{np1(a)} to succeed. Finally, since \texttt{np1(X)} has succeeded 
both with \texttt{X} constrained and for every member of \texttt{X}'s 
prohibited value list, the forall itself will succeed, setting \texttt{X}'s 
prohibited value list to an empty list before adding it to the CHS.

Two points are worth noting. First, for-alls are only used in internally 
generated code; they are not made available to users. This allows us to 
ensure that \texttt{V} will always be unbound at the time the call is made. 
Second, the correctness of our for-all algorithm relies on our use of the 
s(ASP) universe, and specifically on the properties given in Proposition 
\ref{th:varproperties}. The above algorithm will not work in cases where the 
domain of a variable may be finite or where it is possible to constrain a 
variable against every element of its domain. For example, given the rule
\begin{verbatim}
p(1).
\end{verbatim}
the goal \texttt{forall(X, p(X))} will fail under our method. This behavior 
is correct when the domain of \texttt{X} is the s(ASP) universe, as required by 
our method. However, were the domain of \texttt{X} allowed to be $\{ 1 \}$, 
the failure of the forall would be incorrect.

It is important to note that our use of dual rules may sometimes produce 
unexpected results. Because the disunification of two negatively constrained 
variables is illegal under our program restrictions (Section 
\ref{sec:restrictions}), it is implicitly illegal to call the dual of a rule 
which contains a non-ground structure in the head, unless the corresponding 
argument in the call is structured such that no comparisons between a variable 
and a non-ground element are made. Since the argument in the call and the 
structure in the head of the rule are unified in the original clause, the dual 
will contain a disunification between these elements. Consider the rule:
\begin{verbatim}
p([X | T]) :- q(X), p(T).
\end{verbatim}
\noindent Because X and T are present in a structure in the head, the first 
step in creating a dual would be to abstract them out of the head, as discussed 
earlier in this section. However, that makes them body variables, necessitating
the use of a forall. Thus, the final dual is:
\begin{verbatim}
not p(Z) :- forall(X, forall(T, np1(Z, X, T))).
np1(Z, X, T) :- Z \= [X | T].
np1(Z, X, T) :- Z = [X | T], not q(X).
np1(Z, X, T) :- Z = [X | T], q(X), not p(T).
\end{verbatim}
If this dual is called with a list containing unbound variables, the 
disunification operation in the first clause of \texttt{np1(Z, X, T)} may 
encounter illegal cases and trigger a fatal error. For instance, given the call 
\texttt{not p([A | B])}, the corresponding call will be \texttt{np1([A | B], X, 
T).} The disunification operation in the first clause of np1(Z, X, T) will 
attempt to recursively disunify \texttt{A} with \texttt{X} and \texttt{B} with 
\texttt{T}. As both operations are illegal, execution will halt and report an 
error. The same situation will occur if the same variable occurs more than once 
in the head of a rule. Because subsequent occurrences are replaced with unique 
variables and goals are added to the rule body unifying them with the original, 
the dual contains corresponding disunification operations. As a result, calling 
the dual with non-ground values for such variables will trigger an error. 
Consider the following rule and its dual:
\begin{verbatim}
t(A, A).
not t(A, B) :- A \= B.
\end{verbatim}
Because the original rule contains no disunification operations, \texttt{t(A, 
A)} can be called for any value of \texttt{A}. However, the dual will produce 
an error unless either \texttt{A} or \texttt{B} is bound such that no 
negatively constrained variables are disunified. Consider the following query:
\begin{verbatim}
?- not t(A,2), not t(B,1), not t(A,B).
\end{verbatim}
The call \texttt{not t(A, 2)} will succeed constraining \texttt{A} against 2 
and the call \texttt{not t(B, 1)} will succeed constraining \texttt{B} against 
1. However, the call \texttt{not t(A, B)} will trigger an error when it 
attempts to disunify \texttt{A} and \texttt{B}.

% !!!TODO!!! Add dualization algorithm, maybe code

\subsubsection{Restrictions on Legal Programs} \label{sec:restrictions}

While one of our primary goals in developing s(ASP) has been to support the 
largest possible class of normal logic programs, some restrictions are still 
required under our current method. Now that we have introduced the other 
core concepts, we are equipped to discuss these restrictions.

\begin{definition} \label{def:restrictions}
	A \textbf{legal program} is a normal logic program, as defined 
	in Definition \ref{def:normal}, which satisfies the following restrictions:
	\begin{enumerate}
		\item Operands of arithmetic operations must be ground at the time they 
		are executed.
		\item Left recursion cannot lead to success.
		\item A negatively constrained variable (see Section 
		\ref{sec:variables}), cannot be \textit{disunified} with or 
		\textit{constrained against} another negatively constrained variable.
	\end{enumerate}
\end{definition}

Of these restrictions, we believe that the first two may eventually be lifted 
by modifying our method. Delayed expansion and improved constraint handling 
may remove the need for ground arithmetic, while tabling should eliminate 
issues related to left recursion. At present, these changes are left for future 
work.

The remaining restriction is integral to our current method. Recall from 
Proposition \ref{th:varproperties} that one of the key properties of our 
method is that the domain of a negatively constrained variable can never be 
empty with respect to the s(ASP) universe. Allowing a negatively constrained 
variable to be constrained against another with an empty prohibited value list 
would violate the property, breaking our method. Improved constraint 
handling might allow variable disunification to occur in cases where neither 
variable's domain would become empty, but we leave this for future work. For 
the time being, the only exception to this restriction is a special case 
involving even loops, described in Section \ref{sec:even}.

\subsection{Constructing the s(ASP) Method} \label{sec:execution}

Now that we have covered the necessary core concepts, we can discuss several 
other important aspects of the s(ASP) method. To avoid repetition, we will 
discuss these aspects in terms of differences between our propositional method, described in Section \ref{sec:grounded}, and our predicate method. Thus, in this section, we will look at changes made to 
coinduction (Section \ref{sec:cosld}), even loops (Section \ref{sec:even}) and 
consistency checking (Section \ref{sec:consistency}).

\subsubsection{Coinduction} \label{sec:cosld}
%!!! Needs to be rewritten to either separate success and failure properly
%!!! or call the checks something other than success and failure.
Like our propositional method, s(ASP) executes programs using a modified form 
of co-SLD resolution \cite{iclp2007}. The most obvious changes to the predicate method are the 
incorporation of negatively constrained variables and the corresponding use of 
constructive unification and disunification. However, additional changes are 
needed to handle negatively constrained variables correctly. In this section, 
we will look at the changes needed to adapt both coinductive failure and 
coinductive success for use with s(ASP).

Under our method for propositional programs, coinductive failure occurs when 
the negation of a goal is present in the CHS \cite{goalasp}. For example, if 
\texttt{not p} is found in the CHS when checking \texttt{p}, the call to 
\texttt{p} will fail coinductively. So, when dealing with propositional 
programs, it is sufficient to simply fail when a match for the negation is 
found. It might seem that extending this method to the predicate case would be 
as simple as checking to see if a goal's negation unifies with any entry in the 
CHS. However, this can lead to incorrect behavior when combined with CHS 
entries that include negatively constrained variables. Consider a program 
consisting of the following rule, with its dual added for convenience:
\begin{verbatim}
pi(X) :- X = 3.14.
not pi(X) :- X \= 3.14.
\end{verbatim}
Correct behavior requires that a call to \texttt{not pi(X)} should always 
succeed with \texttt{X} \textbackslash= 3.14. However, this may not happen 
if we rely on ordinary unification to check for coinductive failure. For 
instance, if we execute the program with the query \texttt{?- pi(Y), not 
pi(X)}, the goal \texttt{pi(Y)} will succeed for \texttt{Y} = 3.14 and 
execution will move on to \texttt{not pi(X)}. However, the negation of 
\texttt{not pi(X)} will unify with the CHS entry for \texttt{pi(Y)}, causing 
the query to incorrectly fail.

Our solution to the problem once again relies on constructive negation. Instead 
of simply failing when the negation of a goal unifies with an entry in the CHS, 
coinductive failure is viewed as a filter, allowing bindings whose negation 
does not unify with a CHS entry to succeed. This is accomplished by 
constraining variables in the call such that the call's negation will no longer 
unify with any entry in the CHS. In the above example, when \texttt{not pi(X)} 
is tested, X will be constrained against 3.14, preventing its negation, 
\texttt{pi(X)}, from unifying with the CHS entry for \texttt{pi(3.14)}.

Two things are worth noting about this process. First, the \textit{negation} of 
a goal is checked rather than its \textit{dual}. For example, the dual of 
\texttt{pi(3.14)} is \texttt{not pi(X)} where \texttt{X} is constrained 
against 3.14, but the negation of \texttt{pi(3.14)} is simply \texttt{not 
pi(3.14)}. Second, the coinductive failure check may be non-deterministic. 
Consider the following rule:
\begin{verbatim}
q(X) :- X \= 2, X \= 3.
\end{verbatim}
Given the query \texttt{?- q(X), not q(Y)}, at the time \texttt{not q(Y)} is 
called, the CHS will contain \texttt{q(X)}, with \texttt{X} constrained against 
both 2 and 3. This leaves two ways for the coinductive failure check to 
succeed: \texttt{Y} may be set to either 2 or 3. This choice will be made 
non-deterministically: 2 will be selected first, but 3 may be chosen when 
backtracking.

Non-determinism may also arise when testing goals with more than one argument. 
Consider a modification of the previous rule:
\begin{verbatim}
q(X, Y) :- X \= 2, Y \= 3.
\end{verbatim}
Now, given the query \texttt{?- q(X, Y), not q(A, B)}, at the time \texttt{not 
q(A, B)} is called, the CHS will contain \texttt{q(X, Y)}, with \texttt{X} 
\textbackslash= 2 and \texttt{Y} \textbackslash= 3. Again, this leaves two ways 
for the coinductive failure check to succeed: \texttt{A} may be set to 2 or 
\texttt{B} may be set to 3.

To ensure that all cases are considered when executing the coinductive failure 
check, each argument is considered separately, first to last, with subsequent 
arguments being selected on backtracking once all choices for the previous 
argument have been exhausted. Consider the rules:
\begin{verbatim}
q(W, X) :- W \= 2, W \= 3, X \= 2, X \= 3.
q(Y, 3) :- Y \= 2, Y \= 3.
\end{verbatim}
Now, given the query \texttt{?- q(W, X), q(Y, 3), not q(A, B)}, at the time 
\texttt{not q(A, B)} is called, the CHS will contain both \texttt{q(W, X)}, 
with \texttt{W} \textbackslash= 2, \texttt{W} \textbackslash= 3, \texttt{X} 
\textbackslash= 2, \texttt{X} \textbackslash= 3 and \texttt{q(Y, 3)}, with 
\texttt{Y} \textbackslash= 2, \texttt{Y} \textbackslash= 3. When executing the 
coinductive failure check for \texttt{not q(A, B)}, the first argument will be 
examined first, and the goal will be allowed to succeed first with \texttt{A} = 
2 and then with \texttt{A} = 3. Having exhausted all options for success via 
the first argument, further backtracking will lead to the second argument being 
examined. Here, success is possible only when \texttt{B} = 2. Thus, the 
coinductive failure check for \texttt{not q(A, B)} will succeed up to three 
times: once each for \texttt{A} = 2, \texttt{A} = 3 and \texttt{B} = 2. 
Together, these cases cover all possible scenarios where \texttt{not q(A, B)} 
does not unify with any element of the CHS.

Like coinductive failure, coinductive success must be also modified to work 
with predicate programs. In this case, we must differentiate between CHS 
entries and ancestors in the call stack which are exact matches for a call and 
those which simply unify with it.
\begin{definition}
	Two terms are an \textbf{exact match} if they can be constructively unified 
	without altering the prohibited value lists of any variables present in 
	either argument.
\end{definition}
For example, given variables \texttt{X} and \texttt{Y}, if both are constrained 
against 2, then they are an exact match. However, if \texttt{X} is constrained 
against 2 and \texttt{Y} against 2 and 3, they are unifiable, but not an exact 
match. The same applies to compound terms: \texttt{f(X)} and \texttt{f(Y)} are 
an exact match if \texttt{X} and \texttt{Y} are an exact match.

When testing for coinductive success, exact matches will allow success or 
failure to be deterministic, while other matches will be non-deterministic.
The testing process for a call \texttt{C} is as follows:
\begin{enumerate}
	\item \texttt{C} is tested against the CHS for exact matches. If one is 
	found, deterministically succeed.
	\item \texttt{C} is tested against each entry \texttt{D} in the call stack. 
	The number of negations between \texttt{C} and \texttt{D} are counted, 
	excluding \texttt{C} and \texttt{D} themselves.
	\begin{enumerate}
		\item If \texttt{C} and \texttt{D} are an exact match with no 
		intervening negations, fail deterministically.
		\item If \texttt{C} constructively unifies with \texttt{D} with an 
		even, non-zero number of intervening negations, succeed. If \texttt{C} 
		and \texttt{D} are an exact match, success is deterministic, otherwise 
		it is non-deterministic. This non-determinism simply allows \texttt{C} 
		to be executed normally by step 3 upon backtracking, allowing solutions 
		which might otherwise be missed. \textit{Unification here requires an 
		occurs check for correctness (see below).}
	\end{enumerate}
	\item If no matches are found or all deterministic matches have been 
	exhausted, execute \texttt{C} normally.
\end{enumerate}

The use of exact matches in steps 1 and 2(a) is necessary for completeness. 
With constructive unification, a call would always succeed if it unified an 
entry in the CHS and fail if it unified with an entry in the call stack with no 
intervening negations. However, this behavior could result in solutions being 
skipped. For example, in step 2(a), exact matches are needed to avoid 
false positives when detecting positive loops. Consider the rules
\begin{verbatim}
r(V) :- r(V2).
r(3.14).
\end{verbatim}
Were a program containing these rules to be grounded, positive loops would only 
be present for those cases where \texttt{V} = \texttt{V2}. However, since 
\texttt{V2} will always be unbound, \texttt{r(V)} and \texttt{r(V2)} will 
always constructively unify. As a result, a positive loop would always be 
detected if ordinary coinductive success were used, leading to failure in all 
cases. However, if exact matches are required, a positive loop will be detected 
only when \texttt{V} is unbound. As a result, the query \texttt{?- r(V)} will 
succeed for any \texttt{V}: a call with an unbound \texttt{V} will succeed via 
the second rule, while call with a ground or semi-ground \texttt{V} will 
succeed via the first, with \texttt{r(V2)} being satisfied by the second. 
Consider the query \texttt{?- r(1)}:
\begin{enumerate}
	\item \texttt{r(1)} will be checked for coinductive failure/success.
	\item Since the conditions for immediate success or failure are unmet, 
	\texttt{r(1)} will be added to the call stack and expanded using the 
	first rule.
	\item \texttt{r(V2)} will be checked for coinductive failure/success.
	\item Since the conditions immediate success or failure are unmet, 
	\texttt{r(V2)} will be added to the call stack and expanded using the 
	first rule.
	\item The new \texttt{r(V2)}, \texttt{r(V2')} for convenience, will be 
	checked for coinductive failure/success.
	\item Since \texttt{V2'} and \texttt{V2} both have empty prohibited value 
	lists, \texttt{r(V2')} and \texttt{r(V2)} are an exact match. As there 
	are no intervening negations between the two calls, \texttt{r(V2')} 
	will fail.
	\item Execution will backtrack to the expansion of \texttt{r(V2)} and 
	try the second rule.
	\item \texttt{r(V2)} will unify with \texttt{r(3.14)} and succeed.
	\item \texttt{r(1)} will succeed, returning the partial stable model 
	\{\texttt{r(1), r(3.14)}\}.
\end{enumerate}

The criteria for step 2(b) is similar to that of our propositional method:
coinductive success only occurs if an even, non-zero number of negations exist 
between a recursive call and its ancestor in the call stack \cite{goalasp}. 
However, when attempting to unify the current call with an ancestor in the call 
stack, an \textit{occurs check} is necessary to retain correctness. That is, a 
variable cannot unify with any term which contains the same variable. Consider 
the following rules for successor notation:
\begin{verbatim}
n(0).
n(s(X)) :- n(X).
\end{verbatim}
Given the query \texttt{?- n(s(s(X))).} s(ASP) will produce the partial stable 
models \{\texttt{n(0),n(s(0)),n(s(s(0)))}\}, 
\{\texttt{n(0),n(s(0)),n(s(s(0))),n(s(s(s(0))))}\} and so on. However, 
without an occurs check, the query would fail: \texttt{n(s(s(X)))} would be 
added to the call stack and \texttt{n(s(X))} would be called. However, when 
checking coinductive success, \texttt{n(s(X))} would unify with 
\texttt{n(s(s(X)))}, triggering failure due to the lack of intervening 
negations. The occurs check prevents the two calls from unifying, thereby 
preserving correctness.

%note: examine WRT tabling
It is also important to note that coinduction will prevent infinite looping 
unless unbounded recursion occurs in such a way that no recursive call can 
constructively unify with any of its ancestors. For instance, the following 
program will produce infinite looping:
\begin{verbatim}
s(X) :- X2 is X + 1, s(X2).
?- s(1).
\end{verbatim}
In this example, every instance of \texttt{s(X)} will be ground and unique, 
thus preventing coinductive success from ever occurring. However, such cases 
can be avoided by rewriting the rules in question to ensure that either a 
recursive call will eventually unify with some ancestor or that some base case 
will eventually be reached.

\subsubsection{Even Loops} \label{sec:even}

% !!!TODO!!! Don't refer to as special case.

Even loops (Definition \ref{def:even}) have special significance in 
the stable model semantics: they indicate that a goal may be either 
true or false. Under s(ASP), a special case of even loops must be addressed for 
the sake of completeness: even loops containing loop variables.
\begin{definition}
	\textbf{Loop variables} are variables which occur in both a recursive call 
	and its ancestor in an even loop, and are unbound or negatively constrained 
	when the ancestor call succeeds.
\end{definition}

Observe that for a variable to be present in both a recursive 
call and its ancestor, it must also occur in each of the intervening calls 
which are part of the even loop. In this situation, the chain of literals in 
the loop may be either true or false for every grounding of the loop variables 
which does not produce a contradiction in the CHS. Because the s(ASP) universe 
is infinite, the result is that any program with at least one stable model 
containing a loop variable will have an infinite number of stable models which
contain an infinite number of elements, and thus have an infinite number of 
partial stable models. Consider the following program:
\begin{verbatim}
p(X, Y) :- not q(X, Y), t(Y, Y).
q(X, Y) :- not p(X, Y).
\end{verbatim}
with the query \texttt{?- q(X, Y)}. Because \texttt{X} and \texttt{Y} occur as loop 
variables, \texttt{q(X, Y)} and \texttt{not p(X, Y)} may be true or false for 
each possible grounding of \texttt{X} and \texttt{Y}, so long as their truth 
values are the same for a given combination (opposing truth values would 
produce a contradiction). For example, \{\texttt{q(1, 2), not p(1, 2)}\} and 
\{\texttt{p(a, b), q(1, 2), not p(1, 2), not q(a, b)}\} are both partial stable 
models of the above program. 
Furthermore, we know from the variable properties given in Proposition 
\ref{th:varproperties} that \texttt{X} and \texttt{Y}'s domains are infinite, 
so there must be an infinite number of these partial stable models.

With this in mind, completeness requires that we have a means of representing 
the potentially infinite number of partial stable models which may result from 
loop variables. Our mechanism for this is to prefix loop variables with a 
question mark (?) when printing them, indicating that literals containing them 
may be either true or false for each grounding of the loop variable which does 
not produce a contradiction. Observe that no information will be lost: it will 
always be possible to ground our output such that it will produce a subset of 
any given stable model for which the query would succeed. 
For example, for the query \texttt{?- q(X, Y)}, the above program will produce 
a single result, \{\texttt{q(?X, ?Y), not p(?X, ?Y)}\}, compactly representing 
an infinite number of partial stable models for each full stable model for 
which the query succeeds. Note that a loop variable will become an ordinary 
negatively constrained variable if used in a forall: having succeeded for all 
values, it can no longer be assigned false for any of them.

Finally, as mentioned in Section \ref{sec:unification}, loop variables force us 
to make an exception to the prohibition that two negatively constrained 
variables cannot be constrained against each other. Ordinarily, goals in the 
CHS cannot be further modified, and this includes placing further constraints 
upon any of their variables. However, in the case of loop variables, it is 
necessary to ensure that no grounding allowed by the output will be incorrect. 
Therefore, if a call \texttt{C} is added to the CHS with a loop variable and a 
call for \texttt{not C} later succeeds, correctness requires that the loop 
variable in \texttt{C}'s CHS entry be constrained against the corresponding 
entry in \texttt{not C}'s entry. If the domain of a loop variable is empty when 
the query and NMR check succeed, failure and backtracking must occur to ensure 
correctness.

\subsubsection{Consistency Checking} \label{sec:consistency}

Recall from Section \ref{sec:grounded} that any constraints imposed by OLON 
rules in a program are enforced by appending a special rule, the NMR check, to 
each query. The NMR check calls sub-checks for each OLON rule in the program, 
enforcing the constraints which they impose. Two changes are necessary to 
support predicate programs. First, the NMR sub-checks must be generated using 
our new dual rule algorithm. Second, because OLON rules apply global 
constraints, we must ensure that the sub-checks are satisfied for all possible 
values of their variables.

The first step is to use our new dual rule algorithm, detailed in Section 
\ref{sec:conneg}, when generating NMR sub-checks. As explained in Section 
\ref{sec:grounded}, a sub-check is created by first appending the negation of 
an OLON rule's head to its body (if not already present), taking the dual of 
the modified rule and assigning it a unique head. Thus, with our new dual rule 
algorithm, the rule
\begin{verbatim}
p(X) :- q(X), not p(X).
\end{verbatim}
will produce the sub-check
\begin{verbatim}
chk_p(X) :- not q(X).
chk_p(X) :- q(X), p(X).
\end{verbatim}

The second issue is that the constraints imposed by OLON rules are global, so 
the resulting sub-checks must hold for all possible bindings of their 
variables. This is accomplished using our for-all mechanism, also described in 
Section \ref{sec:conneg}. The body of the NMR check is modified by considering 
each variable in a sub-check goal to be universally quantified, and abstracting 
it with a forall. Thus, the NMR check for the above sub-check would be
\begin{verbatim}
nmr_check :- forall(X, chk_p(X)).
\end{verbatim}
In the case of headless rules, any variables are already body variables, so the 
necessary forall wrappers will be applied when the sub-checks are created.

With these modifications, the NMR check will correctly apply any constraints 
imposed upon the program. However, this does make it more difficult to identify 
whether a program is legal without running it. Because each sub-check is a dual 
rule, the same caveats which apply to calling negated goals, discussed in
Section \ref{sec:conneg}, also apply to OLON rules in general.

It is interesting to note that one aspect of our method can remain almost 
unchanged: the initial detection of OLON rules. As explained in Section 
\ref{sec:grounded}, OLON rules are detected by finding cycles in the call 
graph which contain an odd number of negations. At a glance, the addition of 
variables would appear to complicate this procedure. Consider the following 
rule:
\begin{verbatim}
p(X) :- q(X, Y), not p(Y).
\end{verbatim}
\noindent An odd loop is present, but only when \texttt{X $=$ Y}. It is easy to 
assume that such a case might require a constraint to be added so that the 
resulting sub-check will exclude cases where \texttt{X} \textbackslash= 
\texttt{Y}, but this is actually unnecessary. In fact, variables and 
constraints can be ignored entirely when detecting OLON rules. Thus, the above 
rule will produce the following NMR check and sub-checks:
\begin{verbatim}
nmr_check :- forall(X, chk_p(X)).

chk_p(X) :- forall(Y, chk_p2(X,Y)).

chk_p2(X,Y) :- not q(X,Y).
chk_p2(X,Y) :- q(X,Y), p(Y).
chk_p2(X,Y) :- q(X,Y), not p(Y), p(X).
\end{verbatim}
\noindent Observe that when \texttt{X} \textbackslash= \texttt{Y}, the 
sub-check will always succeed: one of the first two clauses for 
\texttt{chk\_p2(X, Y)} will succeed in cases where the original rule for 
\texttt{p(X)} would fail and the third clause will succeed in cases where the 
original rule would succeed. Indeed, this will always be the case for such 
``conditional'' OLON rules: in cases where no OLON is present, the 
corresponding sub-check will always be satisfied. Therefore, while adding 
constraints to exclude non-OLON cases might improve performance, they are not 
required for correctness.

\subsection{Overview of the Completed Method} \label{sec:completed}

Now that we have looked at the individual components, we can examine the 
method as a whole.

Under the \textbf{s(ASP) method}, a legal program \texttt{P} is executed 
with a query \texttt{Q} as follows: First, the call graph of \texttt{P} is 
examined to create the NMR check and sub-checks. Next, the body of the NMR 
check is appended to \texttt{Q}. Then, each goal \texttt{G} in \texttt{Q} 
is executed in order. If \texttt{G} is an arithmetic expression, a unification 
or disunification operation, or a forall, it is executed accordingly. 
Otherwise, \texttt{G} is checked against the CHS:
\begin{itemize}
	\item If the CHS contains an exact match for \texttt{not G}, \texttt{G} 
	fails \textit{deterministically}.
	\item If the CHS contains an exact match for \texttt{G}, \texttt{G} 
	succeeds \textit{deterministically}.
	\item If no exact match is present in the CHS, \texttt{G} is 
	constrained against any CHS entries which unify with \texttt{not G}. 
	This process may be \textit{non-deterministic}.
\end{itemize}
Should \texttt{G} pass the CHS check without succeeding or failing, the call 
stack is examined for any cycles containing \texttt{G}, starting at the 
most recent call and working back:
\begin{itemize}
	\item If \texttt{G} is an exact match for an entry in the call stack 
	with no intervening negations, \texttt{G} fails 
	\textit{deterministically} (positive loop).
	\item If \texttt{G} is an exact match for an entry in the call stack with 
	an even, non-zero number of negations, \texttt{G} succeeds 
	\textit{deterministically} (coinductive success).
	\item If \texttt{G} \textit{constructively unifies} with an entry in the 
	call stack with an even, non-zero number of negations, \texttt{G} succeeds 
	\textit{non-deterministically} (coinductive success).
\end{itemize}
Should \texttt{G} pass the call stack check without succeeding or failing, 
either \texttt{G} does not match any entry in the call stack or all 
non-deterministic options have been exhausted. Then, \texttt{G} is expanded 
using the rules in \texttt{P}. If \texttt{G} succeeds in this manner, it is 
added to the CHS. If every goal in \texttt{Q} succeeds, the query succeeds. 
Finally, the domains of any loop variables are checked to ensure that they are 
non-empty. If so, execution succeeds and the elements in the CHS form a partial 
stable model of \texttt{P}. Figure \ref{fig:metainterpreter} contains an 
abstract meta-interpreter for this method.

\begin{figure}
	\begin{center}
		\begin{verbatim}
		sasp(Q, NMR) :-
		    append(Q, NMR, Q2),
		    exec_goals(Q2),
		    check_loop_variable_domains,
		    print_chs.
		
		exec_goals([X | T]) :-
		    exec_goal(X),
		    exec_goals(T).
		exec_goals([]).
		
		exec_goal(X) :-
		    X = forall(V, G), !,
		    exec_forall(V, G).
		exec_goal(X) :-
		    check_chs_and_call_stack(X, CHSResult),
		    exec_goal2(X, CHSResult).
		
		exec_goal2(X, success). % coinductive success
		exec_goal2(X, expand) :- % expand using rules
		    get_matching_rule(X, R), % select subsequent rules on backtracking
		    exec_goals(body(R)),
		    add_to_chs(X).
		
		exec_forall(V, G) :-
		    unbound(V), % fail if variable is bound or constrained
		    exec_goal(G), % first solve goal normally
		    Cons = get_constraints(V), % fail if variable is bound
		    exec_with_each_constraint_value(G, V, Cons),
		    set_unbound(V), % goal succeeded for all V.
		    add_to_chs(G).
		
		check_chs_and_call_stack(X, failure) :-
		    Xn = dual(X),
		    exact_match_in_chs(Xn), !. % coinductive failure unavoidable
		check_chs_and_call_stack(X, success) :-
		    exact_match_in_chs(X), !. % coinductive success unavoidable
		check_chs_and_call_stack(X, Result) :- % avoid failure, if possible.
		    constrain_goal_against_unifiable_duals(X), % non-deterministic.
		    check_call_stack(X, Result).
		
		% Check the call stack for cycles on current goal. If a cycle over an
		% exact match is found, don't look for other matches when backtracking.
		check_call_stack(X, failure) :-
		    call_stack_has_positive_cycle_w_exact_match(X), !.
		check_call_stack(X, success) :-
		    call_stack_has_even_cycle_w_exact_match(X), !.
		check_call_stack(X, success) :-
		    call_stack_has_even_cycle(X), % non-deterministic
		    unify_goal_w_match.
		check_call_stack(X, expand).
		\end{verbatim}
	\end{center}
	\caption{Abstracted s(ASP) Meta-interpreter}
	\label{fig:metainterpreter}
\end{figure}

\section{Correctness} \label{sec:correct}

In this section, we will discuss the correctness of the s(ASP) method. We 
will show that s(ASP) is sound for all legal programs (Definition 
\ref{def:restrictions}) and argue that, while completeness is in fact 
impossible, the method is still useful for the vast majority of practical 
programs. We will begin by looking at the propositional case and then move on 
to the predicate case.

\subsection{Propositional Programs} \label{sec:groundcorrect}

In the proofs below, we demonstrate that s(ASP) is sound for all legal programs 
grounded over the s(ASP) universe and complete for the all finite, ground, 
legal programs. Note that this completeness class includes finitely groundable 
programs which have been grounded over the Herbrand universe prior to execution.

\begin{theorem} \label{th:groundequiv}
	For legal, finite, ground programs, our predicate method is equivalent 
	to our propositional method.
\end{theorem}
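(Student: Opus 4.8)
The statement claims that for legal, finite, ground programs, the s(ASP) predicate method coincides with the propositional method. The natural strategy is to argue that when a program is finite, ground, and legal, every feature that distinguishes the predicate method from the propositional method becomes inert—either never invoked or reduced to its propositional behavior. So the plan is to go feature-by-feature through the differences introduced in Section \ref{sec:sasp} and show each one degenerates in the ground case.

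The plan is to first catalogue precisely where the two methods diverge. These points of divergence are: (i) negatively constrained variables and constructive unification/disunification (Section \ref{sec:unification}); (ii) the modified dual-rule construction with abstraction of head arguments, dependency-tracking goals, and \texttt{forall} wrappers for body variables (Section \ref{sec:conneg}); (iii) the coinductive-failure-as-filter mechanism and the exact-match refinement of coinductive success (Section \ref{sec:cosld}); (iv) loop variables and the associated exception to the disunification restriction (Section \ref{sec:even}); and (v) the \texttt{forall}-wrapped NMR sub-checks (Section \ref{sec:consistency}). The first step I would carry out is to observe that in a ground program there are no variables at all, so by Definition \ref{def:convar} no negatively constrained variables can ever arise. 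Consequently constructive unification and constructive disunification collapse to their traditional counterparts—exactly the final bullet in each of their definitions—and an \emph{exact match} (Definition of exact match) coincides with ordinary syntactic equality, since no prohibited value lists exist to be altered.

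The second step would be to push this observation through each mechanism. For dual rules: with no variables in any clause head, the head-abstraction step introduces nothing, there are no body variables, so no \texttt{forall} is ever generated, and the dependency-tracking prefix reduces to the propositional DeMorgan construction of Section \ref{sec:conneg}; hence the predicate dual coincides with the propositional dual. For coinductive failure: since \texttt{not G} can only exact-match or fail to match a ground CHS entry, the ``filter'' has no variables to constrain, so it reduces to the propositional rule of simply failing when the negation is present. For coinductive success: every match is an exact match, so steps 1 and 2(a)--(b) of the testing procedure become deterministic and the occurs-check issue never arises; this recovers the even-loop/positive-loop behavior of the propositional method. For loop variables: Definition of loop variables requires a variable occurring in a recursive call, which cannot exist in a ground program, so the entire even-loop exception and the loop-variable domain check at the end are vacuous. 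Finally, the NMR sub-checks are built from the same (now-propositional) dual construction with no variables to wrap in \texttt{forall}, matching the propositional NMR check of Section \ref{sec:grounded}. Assembling these, both methods execute the query against identical rules and duals using identical unification, coinduction, and consistency checks, and therefore produce the same partial stable models.

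The main obstacle I anticipate is not any single mechanism but \emph{completeness of the case analysis}: the theorem is essentially a claim that the predicate method was engineered as a conservative extension of the propositional one, so the real work is verifying that the enumerated list of divergences is exhaustive and that each genuinely becomes inert rather than merely ``usually'' inert. In particular, I would want to be careful about the meta-interpreter in Figure \ref{fig:metainterpreter}: I should check that every branch that behaves differently in the predicate case (\texttt{exec\_forall}, \texttt{constrain\_goal\_against\_unifiable\_duals}, the non-deterministic even-cycle branch, and \texttt{check\_loop\_variable\_domains}) is either never reached or deterministically trivial when all terms are ground. Establishing that these branches collapse uniformly—rather than arguing informally that ``variables don't appear''—is the step most likely to require care, and I would structure the proof around a lemma asserting that no negatively constrained variable can ever be created during execution of a ground program, from which the collapse of every listed mechanism follows.
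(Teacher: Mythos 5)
Your proposal is correct and follows essentially the same approach as the paper, which simply asserts that the theorem ``holds by design'' because all modifications introduced for the predicate method apply only to cases involving variables, which cannot occur in a legal, finite, ground program. Your version is in fact considerably more thorough than the paper's one-paragraph argument, since you actually enumerate the points of divergence and verify each one collapses, whereas the paper leaves that case analysis implicit.
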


\begin{proof}
	This theorem holds by design. Our propositional method forms the basis for 
	our predicate method, and all the modifications we have introduced 
	apply only to cases involving variables and illegal programs. Thus, for a 
	legal, finite, ground program, which, by definition, contains no variables, 
	the two methods will be equivalent.
\end{proof}

\begin{corollary}
	Our predicate method is sound and complete for finite, legal, ground
	programs.
\end{corollary}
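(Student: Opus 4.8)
The plan is to obtain this corollary by composing Theorem~\ref{th:groundequiv} with the previously established soundness and completeness of the propositional method. First I would recall the fact stated in Section~\ref{sec:grounded}: the propositional method ``has been proven sound and complete with respect to the Gelfond--Lifschitz method'' (the Galliwasp results). This supplies a fixed reference point---on a propositional program, the propositional method returns exactly the partial stable models that are subsets of genuine stable models under the GL method, succeeding precisely when such a model exists and failing otherwise.

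Next I would invoke Theorem~\ref{th:groundequiv}: on the class of legal, finite, ground programs---which, by definition, contain no variables---the predicate method behaves identically to the propositional method. Since soundness and completeness are properties of the input/output behavior of a method relative to the stable model semantics, and the two methods exhibit the \emph{same} behavior on this class, the predicate method inherits both properties directly. Concretely, for any such program \texttt{P} and query \texttt{Q}, the predicate method returns a partial stable model containing \texttt{Q} if and only if the propositional method does, which by the cited result holds if and only if one exists under the GL method.

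One subtlety I would make explicit is that the reference semantics must coincide in both halves of the argument. As noted in Section~\ref{sec:universe}, the propositional soundness/completeness result is to be read relative to the GL method extended to the s(ASP) universe, and by the conservative-extension argument there, this extended GL method agrees with the ordinary GL method on programs grounded over the Herbrand universe. Thus there is no mismatch between the semantics used for the propositional result and the semantics targeted by the corollary, and the composition goes through cleanly.

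The one point requiring care---the main obstacle, such as it is---is checking that the ``equivalence'' of Theorem~\ref{th:groundequiv} is strong enough to transfer \emph{completeness} and not merely soundness; that is, that the two methods agree on failure and on the full set of returned models, not just on individual successes. I would address this by observing that the equivalence is justified in that theorem by the fact that every one of the predicate method's modifications applies only to cases involving variables or illegal programs, so on a legal, finite, ground program each execution step coincides. Hence no solution found by one method is missed by the other, and completeness is transferred along with soundness.
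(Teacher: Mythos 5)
Your proposal is correct and follows essentially the same route as the paper: cite the previously established soundness and completeness of the propositional (Galliwasp) method for finite ground programs, then transfer both properties via the equivalence of Theorem~\ref{th:groundequiv}. The paper's own proof is just these two steps stated tersely; your additional remarks about the reference semantics and the strength of the equivalence are reasonable elaborations but not a different argument.
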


\begin{proof}
	Our ground method has already been proven sound and complete for the set 
	of all finite programs grounded over the Herbrand universe \cite{goalasp}. 
	Therefore, by Theorem \ref{th:groundequiv}, our predicate method is 
	also sound and complete for these programs.
\end{proof}

\begin{theorem} \label{th:groundsound}
	Our predicate method is sound for legal programs grounded over the 
	s(ASP) universe.
\end{theorem}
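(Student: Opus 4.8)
The plan is to reduce the (possibly infinite) grounded case to the already-established propositional results by exhibiting a sound simulation between a single successful predicate-level computation and a propositional computation over $\mathrm{ground}_{U_S}(P)$. The key observation is that every successful s(ASP) derivation is finite: it selects only finitely many rule instances and adds only finitely many literals to the CHS. The difficulty is that the grounded program itself is infinite, so Theorem \ref{th:groundsound} cannot be obtained by directly invoking the finite propositional soundness result. Instead I would argue that each predicate-level operation soundly reflects its ground counterpart(s), and that the global OLON constraints---the only genuinely non-local ingredient---are certified over the \emph{whole} infinite grounding by the forall-based NMR check.

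First I would fix a successful run of s(ASP) on a legal program $P$ with query $Q$, yielding a CHS $M$, and construct a grounding substitution $\theta$ that instantiates every negatively constrained variable occurring in $M$ to a concrete element of $U_S$ not on its prohibited value list. Proposition \ref{th:varproperties} guarantees that such a $\theta$ always exists, since a legal program can never constrain a variable against every element of the infinite s(ASP) universe; this is precisely the point at which the choice of universe is used. Applying $\theta$ yields a set of ground literals $M\theta$ together with, for each rule instance used in the derivation, a corresponding ground clause of $\mathrm{ground}_{U_S}(P)$. I would then verify by induction on the structure of the derivation that the predicate machinery soundly simulates the ground semantics: constructive unification and disunification correctly decide ground equality and disequality of the chosen instances (Section \ref{sec:unification}); dual rules succeed exactly when the corresponding ground call fails under NAF (Definition \ref{def:dual}); and coinductive success and failure over the call stack and CHS mirror the positive-loop and consistency conditions of the propositional method. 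Hence the projected computation is a legal propositional derivation over a finite subprogram $P' \subseteq \mathrm{ground}_{U_S}(P)$, and by the finite propositional soundness result \cite{goalasp} together with Theorem \ref{th:groundequiv}, $M\theta$ is a partial stable model of $P'$.

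The main obstacle---and the step I would treat most carefully---is lifting this from the finite fragment $P'$ to the full infinite grounded program, and in particular certifying the \emph{global} constraints imposed by OLON rules, for which soundness over $P'$ alone is insufficient. Here I would rely on the forall mechanism of Section \ref{sec:conneg}: because each NMR sub-check is universally quantified over its variables and succeeds only when its body holds for an unbound variable together with every value on that variable's prohibited list, a successful predicate-level NMR check entails that \emph{every} ground instance of \emph{every} OLON constraint of $\mathrm{ground}_{U_S}(P)$ is satisfied by $M\theta$, not merely those instances appearing in $P'$. The soundness of this step again rests on the universe properties of Proposition \ref{th:varproperties}, which prevent a forall from succeeding vacuously and guarantee that no sub-check is left unchecked on an empty domain.

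Finally I would combine these pieces to show that $M\theta$ extends to a full stable model of $\mathrm{ground}_{U_S}(P)$: the ordinary rules used in the derivation found $M\theta$, while the certified OLON constraints rule out the global inconsistencies that could block such an extension. The handling of loop variables (Section \ref{sec:even}) must be folded in here, since their non-empty-domain check ensures that every contradiction-free grounding of the printed answer is realizable; because $\theta$ was chosen as an arbitrary such grounding, it follows that $M$ is a partial stable model of $P$ over the s(ASP) universe, which is exactly the soundness claim. I expect the inductive simulation of the dual-rule and forall constructions, and the argument that the finite certified fragment genuinely witnesses the infinitely many global constraints, to be where essentially all of the real work lies.
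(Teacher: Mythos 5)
You have proved (or rather, sketched) a different theorem. Theorem~\ref{th:groundsound} sits in the subsection on \emph{propositional} programs: the class of programs it speaks of is the class of programs that have \emph{already been grounded} over $U_S$, i.e., variable-free input. The paper's proof is accordingly a two-step reduction: first, the existing soundness proof of the propositional method with respect to the GL method is extended to the GL method over the s(ASP) universe, using the fact (Section~\ref{sec:universe}, via Shoenfield's special-constants construction) that enlarging the universe is a conservative extension; second, Theorem~\ref{th:groundequiv} is invoked to transfer this to the predicate method, since on ground programs the two methods coincide by construction. No simulation argument, no grounding substitution $\theta$, and no analysis of foralls, dual rules, or constructive unification is needed, because none of that machinery is ever exercised on a variable-free program.

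What you have outlined instead is the soundness of the predicate method on \emph{non-ground} programs relative to their $U_S$-grounding, which is the paper's later Theorem~\ref{model_subset}. Your plan does broadly match the shape of that proof, but the steps you defer are exactly the ones that carry the weight there, and you do not supply mechanisms for them: the ``lifting from the finite touched fragment to the full infinite grounding'' is handled in the paper by the trimming construction together with the splitting theorem of Lifschitz and Turner (plus Lemmas~\ref{unneeded_clauses} and~\ref{unneeded_goals} to justify discarding untouched clauses and goals); the claim that the forall-based NMR check certifies every ground instance of every OLON constraint is Lemma~\ref{forall}; and the claim that coinductive failure keeps the CHS consistent with the ground semantics is Lemma~\ref{ccf}, which requires an induction over term depth that your sketch does not anticipate. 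Your assertion that dual rules succeed ``exactly when'' the ground call fails is also stronger than what soundness needs or what the paper establishes. So as a proof of Theorem~\ref{th:groundsound} the proposal is off target (it builds heavy machinery for a statement that follows from two citations), and as a proof of Theorem~\ref{model_subset} it correctly identifies where the difficulty lies but leaves those parts undischarged.
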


\begin{proof}
	Our propositional method was originally proven sound with respect to the 
	original GL method \cite{goalasp}. It is a simple matter to extend the 
	original soundness proof to show the equivalence of our propositional 
	method 	to the modified GL method described in Section 
	\ref{sec:universe}, thereby proving it sound for legal programs grounded 
	over the s(ASP) universe. Thus, by Theorem \ref{th:groundequiv}, our 
	predicate method is also sound for legal programs grounded over the s(ASP) 
	universe.
\end{proof}

\subsection{Predicate Programs}

The case of predicate logic programs is significantly more complicated than the 
propositional case. While s(ASP) is sound for all legal programs, completeness 
over this class is impossible, as we will discuss.

\subsubsection{Soundness}

Informally, the s(ASP) algorithm is sound if every partial model it generates 
is part of some stable model of the program. To show this we must first 
separate the (grounded) program into three parts:

\begin{itemize}
	\item The set of clauses required to prove the partial model.
	\item The set of clauses related to the query but not needed to prove the 
	partial model.
	\item The set of clauses not related to the query.
\end{itemize}

We will first show that the second set of clauses can be removed without 
affecting the result. That is, the partial model is a part of some stable model 
of the new program, and that stable model is a stable model of the original 
program. We also show that clauses needed to prove that some literal is false 
can be modified by removing goals as long as one ``false'' goal remains. Using 
these two modifications and the splitting theorem \cite{Lifschitz94}, we can 
isolate a subprogram comprised only of clauses and literals touched by the 
s(ASP) algorithm.

Using the terminology of \cite{Lifschitz94}, detailed in the next 
section, this subprogram can be considered the ``bottom'' of the program, 
with the remaining clauses forming the ``top'' of the program. We show that the 
partial model generated by the s(ASP) algorithm is a stable model of the bottom 
of the program. Then, in accordance with the splitting theorem we transform the 
top of the program and show that since the NMR check is satisfied a stable 
model exists for it. By using the splitting theorem and the points discussed 
above, we show that the union of the partial model and the top's stable model 
is a stable model the original program.

Before continuing, we define the following for convenience:
\begin{definition}
	Let \texttt{G} be a goal constructed from atom \texttt{A}. Then, 
	\texttt{atom(G)} = \texttt{A}, and \texttt{goal(A)} is the set of both 
	goals (positive and negative) that can be constructed from \texttt{A}. The 
	arguments of \texttt{A} are said to be the arguments of \texttt{G}.
\end{definition}

\begin{definition}
	Let \texttt{R} be a clause of the form:\\
	\texttt{
		\indent p :- q$_1$, ..., q$i$, ..., q$_m$,\\
		\indent~~~~~not r$_1$, ..., not r$_j$, ..., not r$_n$.\\
	}
	Then:
	\begin{itemize}
		\item \textbf{head(R)} = \texttt{p}
		\item \textbf{pos(R)} = \texttt{$\{$ q$_1$, ..., q$i$, ..., q$_m$ $\}$}
		\item \textbf{neg(R)} = \texttt{$\{$ not r$_1$, ..., not r$_j$, ..., 
			not r$_n$ $\}$}
		\item \textbf{lit(R)} = \texttt{$\{$ H $\}$ $\cup$} pos(\texttt{R}) 
		\texttt{$\cup$} neg(\texttt{R})
	\end{itemize}
\end{definition}

\paragraph{Review of Splitting Theorem}

Next, we review the splitting theorem \cite{Lifschitz94}. A set of literals can 
be used to \textit{split} a ground program. This set is called a 
\textit{splitting set}, and is defined as follows.

\begin{definition}
	A set of ground atoms \texttt{U} is a \textbf{splitting set} for some 
	ground program \texttt{P} if for every clause \texttt{R} in \texttt{P}, 
	head(\texttt{R}) $\in$ \texttt{U} $\Rightarrow$ lit(\texttt{R}) $\subseteq$ 
	\texttt{U}.
\end{definition}

The program is divided into two parts, the top and the bottom. The bottom is 
the set of clauses related to the splitting set and the top is the set of all 
other clauses.
\begin{definition}
	Let \texttt{P} be a ground program and \texttt{U} a splitting set for \texttt{P}. The \textbf{bottom} of \texttt{P} with respect to \texttt{U}, specified as \texttt{b$_U$}(\texttt{P}), is the set of clauses \texttt{r} for which lit(\texttt{r}) $\subseteq$ \texttt{U}. The set \texttt{P} $\setminus$  \texttt{b$_U$}(\texttt{P}) is the \textbf{top} of \texttt{P} with respect to \texttt{U}.
\end{definition}

The stable models of a ground program \texttt{P} can be computed by combining the stable models of \texttt{b$_U$}(\texttt{P}) and the stable models generated by the top. This requires us to generate a new program from \texttt{P} $\setminus$  \texttt{b$_U$}(\texttt{P}) based on the stable models for \texttt{b$_U$}(\texttt{P}).

\begin{theorem}
	Let \texttt{P} be some ground program, \texttt{U} a splitting set of \texttt{P}, and \texttt{X} a stable model for \texttt{b$_U$}(\texttt{P}). For each clause \texttt{r} in \texttt{P} such that pos(\texttt{r}) $\subset$ \texttt{X} and neg(\texttt{r}) $\cap$ \texttt{X} $= \emptyset$, we define a new clause \texttt{r}$^\prime$ with:
	\begin{itemize}
		\item head(\texttt{r}$^\prime$) = head(\texttt{r}),
		\item pos(\texttt{r}$^\prime$) = pos(\texttt{r}) $\setminus$ \texttt{U}
		\item neg(\texttt{r}$^\prime$) = neg(\texttt{r}) $\setminus$ \texttt{U}
	\end{itemize}
	
	We define the program \texttt{e}$_U$(\texttt{P} $\setminus$ \texttt{b}$_U$(\texttt{P}),\texttt{ X}) as the set of of all such new clauses, and for some stable model \texttt{Y} of \texttt{e}$_U$(\texttt{P} $\setminus$ \texttt{b}$_U$(\texttt{P}), \texttt{X}), \texttt{X} $\cup$ \texttt{Y} is a stable model of \texttt{P}.
	
	If either \texttt{b$_U$}(\texttt{P}) or \texttt{e}$_U$(\texttt{P} $\setminus$ \texttt{b}$_U$(\texttt{P}), \texttt{X}) has no stable model then there is no stable model for \texttt{P}.
\end{theorem}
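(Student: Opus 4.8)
The plan is to reduce everything to the characterization of stable models furnished by the GL method: a set \texttt{M} is a stable model of a ground program exactly when \texttt{M} equals the least fixed point \texttt{F} of the residual program obtained from \texttt{M}. Since residual programs contain no negation, the whole argument can be carried out in terms of least models of definite programs and the monotonicity of their consequence operators. Given a candidate model \texttt{M}, I write $X = \texttt{M} \cap \texttt{U}$ and $Y = \texttt{M} \setminus \texttt{U}$, so that $\texttt{M} = X \cup Y$ with $X \subseteq \texttt{U}$ and $Y \cap \texttt{U} = \emptyset$. I would actually prove the stronger biconditional: \texttt{M} is a stable model of \texttt{P} if and only if $X$ is a stable model of \texttt{b}$_U$(\texttt{P}) and $Y$ is a stable model of \texttt{e}$_U$(\texttt{P} $\setminus$ \texttt{b}$_U$(\texttt{P}), $X$). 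The combining direction asserted in the theorem is then the ``if'' half, and the non-existence clause is the contrapositive of the ``only if'' half.

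First I would record the structural consequence of the splitting-set property that drives the proof: no clause outside \texttt{b}$_U$(\texttt{P}) can have its head in \texttt{U}. Indeed, if head(\texttt{r}) $\in$ \texttt{U} then by the definition of a splitting set lit(\texttt{r}) $\subseteq$ \texttt{U}, so \texttt{r} $\in$ \texttt{b}$_U$(\texttt{P}). Hence atoms of \texttt{U} can only ever be derived by bottom clauses. Next I would note that the residual program of \texttt{b}$_U$(\texttt{P}) with respect to \texttt{M} depends only on $X$: since every literal of every bottom clause lies in \texttt{U}, deleting a clause because some negated goal's atom is in \texttt{M} is the same as deleting it because that atom is in $X$. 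So the bottom can be discussed purely in terms of $X$.

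The heart of the argument is a staging lemma for the least fixed point of the residual program of \texttt{P} with respect to \texttt{M}. Using the observation above, I would show that this least model, intersected with \texttt{U}, equals the least model of the residual program of \texttt{b}$_U$(\texttt{P}) with respect to $X$, because the consequence operator iterated from $\emptyset$ can saturate the \texttt{U}-atoms first, no top clause being able to contribute one. Requiring \texttt{M} to be a stable model forces $X$ to coincide with this \texttt{U}-part, which is exactly the condition that $X$ be a stable model of \texttt{b}$_U$(\texttt{P}). With $X$ thus fixed, I would evaluate the top clauses against $X$: a top clause \texttt{r} still contributes precisely when its \texttt{U}-part is satisfied by $X$, i.e. its positive \texttt{U}-goals lie in $X$ and its negative \texttt{U}-goals do not, in which case stripping those \texttt{U}-goals produces exactly the clause \texttt{r}$^\prime$ of \texttt{e}$_U$(\texttt{P} $\setminus$ \texttt{b}$_U$(\texttt{P}), $X$). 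Identifying the non-\texttt{U} part of the staged computation with the stable-model condition for \texttt{e}$_U$(\texttt{P} $\setminus$ \texttt{b}$_U$(\texttt{P}), $X$) then yields $Y$ as a stable model of that program, completing the biconditional.

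The main obstacle I anticipate is the commutation step buried in the previous paragraph: one must verify clause by clause that the clause-deletion and literal-removal operations of the GL reduct interact with the partial-evaluation construction \texttt{e}$_U$ so that reducing and then partially evaluating against $X$ yields the same definite program on the non-\texttt{U} atoms as partially evaluating and then reducing, and that the staged fixpoint genuinely reaches $X \cup Y$ rather than some smaller or larger set. This is careful bookkeeping rather than deep theory, but it is where every hypothesis is used at once: the splitting-set condition, the survival condition on pos(\texttt{r}) and neg(\texttt{r}) relative to $X$, and the stripping of \texttt{U}-literals. Once the biconditional is established, the theorem follows directly: the combining direction is the ``if'' half, and for the non-existence clause the ``only if'' half shows that any stable model of \texttt{P} would decompose into a stable model of \texttt{b}$_U$(\texttt{P}) together with a stable model of the corresponding \texttt{e}$_U$ program, so the absence of either precludes such a decomposition and hence such a model.
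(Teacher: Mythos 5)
The paper does not prove this theorem at all: it is the splitting theorem of Lifschitz and Turner, and the paper's ``proof'' is a one-line citation to the original source. Your proposal, by contrast, sketches an actual argument, and it is essentially the standard one from that source: decompose a candidate model \texttt{M} as $X = \texttt{M} \cap \texttt{U}$ and $Y = \texttt{M} \setminus \texttt{U}$, prove the biconditional that \texttt{M} is stable iff $X$ is stable for the bottom and $Y$ is stable for \texttt{e}$_U$ of the top, and derive both claims of the theorem from the two directions. The load-bearing observations are all correctly identified: the splitting-set condition forces every clause with head in \texttt{U} into the bottom, so \texttt{U}-atoms are derivable only from bottom clauses; the reduct of the bottom depends only on $X$; the least fixpoint of the reduct of \texttt{P} can be staged so that its \texttt{U}-part is exactly the least model of the bottom's reduct; and the reduct and the partial-evaluation operator \texttt{e}$_U$ commute on the top clauses. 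One genuine merit of your write-up is that your survival condition for a top clause (``its positive \texttt{U}-goals lie in $X$ and its negative \texttt{U}-goals do not'') is the correct one, silently repairing the paper's statement, which writes pos(\texttt{r}) $\subset$ \texttt{X} where it should restrict to pos(\texttt{r}) $\cap$ \texttt{U}. The one caveat is that your proposal is a plan rather than a finished proof: the commutation/staging verification that you flag as ``careful bookkeeping'' is precisely where the work lives, and it is asserted rather than carried out. As a blind reconstruction of a cited classical result, however, the route is sound and would close with that bookkeeping done.
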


\begin{proof}
	Proofs for these results are available in the original paper which 
	introduced the splitting theorem \cite{Lifschitz94}.
\end{proof}

\paragraph{Stripping Unneeded Rules and Body Literals}

\begin{lemma}\label{unneeded_clauses}
	Let \texttt{P} be a ground program, and \texttt{M} a stable model of \texttt{P}. Let \texttt{R} be a clause not in \texttt{P} such that the head of \texttt{R} is in \texttt{M}. Then, \texttt{M} is a stable model of the program \texttt{P} $\cup$ $\{$ \texttt{R} $\}$. 
\end{lemma}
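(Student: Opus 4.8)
The plan is to work directly from the Gelfond--Lifschitz characterization recalled in the GL Method: \texttt{M} is a stable model of a ground program exactly when it equals the least fixed-point of the residual (reduct) program obtained from that program and \texttt{M}. I would write $P^{\texttt{M}}$ for this reduct and $\mathrm{lfp}(\cdot)$ for the least-fixed-point operator, so that the hypothesis reads $\mathrm{lfp}(P^{\texttt{M}}) = \texttt{M}$. The first observation is that the reduct is computed clause-by-clause and depends only on \texttt{M}, hence it distributes over the added clause: $(\texttt{P} \cup \{\texttt{R}\})^{\texttt{M}} = P^{\texttt{M}} \cup \{\texttt{R}\}^{\texttt{M}}$. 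The entire burden therefore reduces to controlling the single extra clause $\{\texttt{R}\}^{\texttt{M}}$.

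Next I would split on what the reduct does to \texttt{R}. If the body of \texttt{R} contains some literal \texttt{not L} with $\texttt{L} \in \texttt{M}$, then step~1 of the GL Method deletes \texttt{R}, so $\{\texttt{R}\}^{\texttt{M}} = \emptyset$ and $(\texttt{P}\cup\{\texttt{R}\})^{\texttt{M}} = P^{\texttt{M}}$; the least fixed-point is unchanged and the lemma is immediate. Otherwise \texttt{R} survives as a definite clause $\texttt{R}'$ (its negative goals stripped by step~2), whose head is still $\texttt{H} = \mathrm{head}(\texttt{R}) \in \texttt{M}$ by hypothesis. The remaining task is to show $\mathrm{lfp}(P^{\texttt{M}} \cup \{\texttt{R}'\}) = \texttt{M}$.

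Here I would argue the two inclusions separately. One direction is pure monotonicity: adding a clause to a definite program can only enlarge its least fixed-point, so $\texttt{M} = \mathrm{lfp}(P^{\texttt{M}}) \subseteq \mathrm{lfp}(P^{\texttt{M}} \cup \{\texttt{R}'\})$. The other direction is the crux, and it is where the hypothesis $\texttt{H} \in \texttt{M}$ does its work: I would show that \texttt{M} is a \emph{model} of $P^{\texttt{M}} \cup \{\texttt{R}'\}$. It is already a model of $P^{\texttt{M}}$ (being its least fixed-point), and it satisfies $\texttt{R}'$ trivially, because a definite clause whose head lies in \texttt{M} is satisfied regardless of its body. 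Since the least fixed-point of a definite program is contained in every model, this yields $\mathrm{lfp}(P^{\texttt{M}} \cup \{\texttt{R}'\}) \subseteq \texttt{M}$, and the two inclusions give equality.

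The main obstacle is conceptual rather than technical: one must resist the worry that firing $\texttt{R}'$ could derive \texttt{H} ``too early'' and trigger a cascade producing atoms outside \texttt{M}. The model-theoretic argument side-steps this entirely, since a clause with a true head places no constraint on the least model; all that is actually needed is the clause-wise distributivity of the reduct together with the standard fact that $\mathrm{lfp}$ is the least model of a definite program. Combining the case analysis then establishes $\texttt{M} = \mathrm{lfp}((\texttt{P}\cup\{\texttt{R}\})^{\texttt{M}})$, i.e.\ that \texttt{M} is a stable model of $\texttt{P}\cup\{\texttt{R}\}$.
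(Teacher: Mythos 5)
Your proof is correct, and it follows the same overall decomposition as the paper's: split on whether the reduct deletes \texttt{R} (some \texttt{not L} in its body with \texttt{L} $\in$ \texttt{M}) or keeps it as a definite clause \texttt{R}$'$, with the first case being immediate. Where you genuinely diverge is in the surviving case. The paper argues operationally about the fixed-point computation --- it notes that since head(\texttt{R}) $\in$ \texttt{M} there must be a supporting clause \texttt{R}$_2$ in \texttt{P} that already derives the head, and then reasons informally that \texttt{R}$'$ ``can not affect literals besides its head,'' so the least model is unchanged. You instead use the standard model-theoretic characterization: $\mathrm{lfp}(P^{\texttt{M}}) \subseteq \mathrm{lfp}(P^{\texttt{M}} \cup \{\texttt{R}'\})$ by monotonicity in the program, and for the reverse inclusion you observe that \texttt{M} is a model of the augmented reduct (it models $P^{\texttt{M}}$ as its least fixed-point, and it satisfies \texttt{R}$'$ vacuously because the head is in \texttt{M}), so the least model of the augmented definite program is contained in \texttt{M}. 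This buys you something real: it cleanly dispatches the ``cascade'' worry --- that \texttt{R}$'$ might fire early and enable derivations outside \texttt{M} --- which the paper's phrasing only gestures at, and it does so without needing the existence of the supporting clause \texttt{R}$_2$ at all. The paper's version, for its part, stays closer to the operational intuition of how the least fixed-point is computed, but your two-inclusion argument is the tighter of the two.
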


\begin{proof}
	There are two cases for \texttt{R}:
	\begin{description}
		\item[Case 1:] There exists \texttt{L} $\in$ neg(\texttt{R}) such that \texttt{L} $\in$ \texttt{M}. In this case \texttt{R} is removed when computing the reduct, and the reduct does not change. Therefore, \texttt{M} is the least model of the reduct.
		\item[Case 2:] \texttt{R} is not removed when the reduct is computed. We will call the transformed clause \texttt{R}$^\prime$.
		
		Since the head of \texttt{R} is in \texttt{M} there must exist some clause R$_2$ in \texttt{P} with head(\texttt{R}) = head(R$_2$) such that for all \texttt{L} $\in$ pos(R$_2$), \texttt{L} $\in$ \texttt{M} and for all \texttt{L} $\in$ neg(R$_2$), \texttt{L} $\not\in$ \texttt{M}. We will call the transformed clause in the reduct \texttt{R}$^{\prime\prime}$.
		
		The only way \texttt{R}$^\prime$ can affect the least model of \texttt{P} $\cup$ $\{$ R $\}$ is to be used to place its head in it. Therefore, since \texttt{R}$^\prime$ can not affect literals besides its head, the body literals in \texttt{R}$^{\prime\prime}$ must be in the least model of the reduct of \texttt{P} $\cup$ $\{$ R $\}$. Thus, head(\texttt{R}$^{\prime\prime}$) (which is also head(R$^\prime$)) must also be in the least model, and the least model for the reduct of \texttt{P} $\cup$ $\{$ R $\}$ is the same as for \texttt{P}'s reduct. So, \texttt{M} must be the least model of the reduct of \texttt{P} $\cup$ $\{$ R $\}$.
	\end{description}
	
	Thus, \texttt{M} is a stable model of \texttt{P} $\cup$ $\{$ R $\}$.
	
\end{proof}

\begin{lemma}\label{unneeded_goals}
	Let \texttt{P} be a ground program and \texttt{M} a stable model of \texttt{P}. Let \texttt{R} be a clause in \texttt{P} such that there exists a literal in the body that is not in \texttt{M}, and let \texttt{G} be a ground goal. Let \texttt{P}$^\prime$ be the program constructed by adding \texttt{G} to the body of \texttt{R}. \texttt{M} is a stable model of \texttt{P}$^\prime$.
\end{lemma}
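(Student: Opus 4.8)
The plan is to mirror the reduct-based argument of Lemma~\ref{unneeded_clauses}, since $P'$ differs from $P$ in exactly one clause. Write $R'$ for the clause obtained from $R$ by appending the ground goal $G$, so that $P' = (P \setminus \{R\}) \cup \{R'\}$ and every clause of $P'$ other than $R'$ is shared verbatim with $P$. Because the Gelfond--Lifschitz reduct is computed clause by clause, the reducts of $P$ and $P'$ with respect to $M$ agree on all of these shared clauses, and only the contribution of $R$ versus $R'$ can differ. Since $M$ is a stable model of $P$, $M$ is the least model of the reduct of $P$, and the goal is to show that $M$ is also the least model of the reduct of $P'$.

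First I would record the consequence of the hypothesis: $R$ has a body literal that $M$ does not satisfy, so the body of $R$ is false in $M$ and $R$ cannot ``fire'' (contribute its head) when the least model of its reduct is built. Because $\text{pos}(R') \supseteq \text{pos}(R)$ and $\text{neg}(R') \supseteq \text{neg}(R)$, the body of $R'$ is at least as strong, so $R'$ is likewise unsatisfied. Paralleling the two cases of Lemma~\ref{unneeded_clauses}, I would split on the form of the unsatisfied literal: if it is a negative goal whose atom lies in $M$, then both $R$ and $R'$ are deleted when their reducts are formed; otherwise it is a positive goal $q \notin M$, and the reduced images of $R$ and $R'$ each retain $q$ in their positive bodies.

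The core of the argument is then a ``non-firing clause removal'' step. Let $Q$ denote the program obtained from the reduct of $P$ by discarding the reduced image of $R$ (taking this image to be empty when $R$ is deleted by the reduct); by the clause-sharing observation, discarding the reduced image of $R'$ from the reduct of $P'$ yields the identical program $Q$. I would first show $\text{lfp}(Q) = M$. Since $Q$ is contained in the reduct of $P$, we have $\text{lfp}(Q) \subseteq M$; and if $R$ survives the reduct, its reduced image retains the positive literal $q \notin M \supseteq \text{lfp}(Q)$, so that image can never fire and re-adding it recovers the reduct of $P$ without changing the least model (when $R$ is deleted there is nothing to re-add). Either way $\text{lfp}(Q) = M$. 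Applying the same reasoning to $P'$, the reduced image of $R'$ (when it survives) again contains $q$, which is absent from $\text{lfp}(Q) = M$, so it too never fires; hence the least model of the reduct of $P'$ equals $\text{lfp}(Q) = M$, and $M$ is a stable model of $P'$.

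The main obstacle, and the step I would spend the most care on, is this non-firing removal argument together with the bookkeeping for the possible forms of $G$. Because $G$ may be either a positive literal or a negative goal, appending it can (i) leave $R'$ deletable exactly when $R$ was, (ii) newly delete $R'$ when $G$ is a negative goal whose atom is in $M$, or (iii) add a fresh positive literal to the reduced body; I must confirm that in every one of these sub-cases the reduced image of $R'$ still fails to fire, so that $\text{lfp}(\text{reduct}(P')) = \text{lfp}(Q) = M$ goes through. Establishing $\text{lfp}(Q) = M$ cleanly before invoking it for $P'$ is what avoids any apparent circularity in the fixed-point reasoning.
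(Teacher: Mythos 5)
Your proof is correct and follows essentially the same route as the paper's: the same two-case split on whether the unsatisfied body literal causes \texttt{R} (and \texttt{R}$^\prime$) to be deleted when the reduct is formed, and the same observation that a surviving reduced clause still retains a positive body atom outside \texttt{M} and therefore can never fire. Your explicit intermediate program $Q$ and the fixed-point argument establishing $\mathrm{lfp}(Q) = \texttt{M}$ make rigorous a step the paper only gestures at (``the same process that causes \texttt{L} to not be in the least model\dots''), but this is a refinement of the same argument rather than a different approach.
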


\begin{proof}
	Let \texttt{P} be a ground program, and \texttt{M} be a stable model of \texttt{P}. Let \texttt{R} be a clause in \texttt{P} such that the head of \texttt{R} is not in \texttt{M}. Let \texttt{G} be some ground goal.
	
	Create new program \texttt{P}$^\prime$ by adding \texttt{G} to the body of \texttt{R}. Call this clause \texttt{R}$^\prime$.
	
	We have two cases:
	\begin{enumerate}
		\item There exists some literal \texttt{L} $\in$ neg(\texttt{R}) and \texttt{L} $\in$ \texttt{M}, or
		\item there is some literal \texttt{L} $\in$ pos(\texttt{R}) such that \texttt{L} $\not\in$ \texttt{M}.
	\end{enumerate}
	
	In case 1, we know that \texttt{R}$^\prime$ will be removed when computing the reduct, and thus can not affect the least model. So we only need to consider case 2. In addition we can assume that \texttt{G} does not cause the clause to be removed from the reduct(otherwise it can not affect the least model). Now, notice that the addition of \texttt{G} does not affect the truth value of \texttt{L}. Thus, it is possible that the same process that causes \texttt{L} to not be in the least model of the reduct of \texttt{P} will cause it to not be in the reduct of \texttt{P}$^\prime$. So, \texttt{R}$^\prime$ cannot be used to place its head in the least model. Since no other clauses have changed, the least model of the reduct of \texttt{P}$^\prime$ is the same as the one for \texttt{P}, and thus \texttt{M} is a stable model of \texttt{P}$^\prime$.
\end{proof}

To prove theorem \ref{model_subset}, we want to separate the part needed to prove the query from the rest. Since the s(ASP) algorithm is goal directed we need to remove clauses related to the query that are not needed to prove it and goals in clauses related to the query, but not needed to prove it. We call this process \textbf{trimming}. Then we will make use the splitting theorem from \cite{Lifschitz94} to divide the new program into two parts, treating the portion needed to prove the query as the bottom, and the rest as the top.

It is important to remember that the s(ASP) algorithm works directly with the ungrounded program, but we will be trimming its ground program. When executing a clause (using it to prove some goal) it is possible (and likely) that variables in the clause will be constrained or bound by some goals in its body. If we associate with a clause a function for the domains of the variables in it, we can treat the state of the clause before and after as separate clauses.

\begin{definition}
	Let \texttt{R} be a clause.
	
	Let $\delta$(\texttt{R}, \texttt{X}) be the domain (set of possible groundings) of the variable \texttt{X} in \texttt{R}. \texttt{X} may be bound or constrained (though the constraint list may be empty). A clause in a program before execution will always have all variables unconstrained and therefore they will have the s(ASP) universe as their domains.
	
	The result of the execution of \texttt{R} succeeding is called $\sigma$(R). The variables in $\sigma$(\texttt{R}) are the same as in \texttt{R}, and for all variables \texttt{X} in \texttt{R}, $\delta$($\sigma$(\texttt{R}), \texttt{X}) $\subseteq$ $\delta$(\texttt{R}, \texttt{X}).
\end{definition}

\begin{definition}
	To \textbf{trim} a program we will follow the following algorithm. Let \texttt{P} be a program, \texttt{P}$^\prime$ be the result of grounding \texttt{P} over the s(ASP) universe, \texttt{M} a partial model of \texttt{P} generated by the s(ASP) algorithm, and \texttt{M}$^\prime$ the grounding of \texttt{M} over the s(ASP) universe. Let $\Phi$(\texttt{R}) be the set of clauses in \texttt{P}$^\prime$ generated by grounding clause \texttt{R} in \texttt{P}.
	\begin{enumerate}
		\item While \texttt{M} is begin computed: Let \texttt{G} be the current goal such that \texttt{G} is not a negated goal or a built-in/system generated literal such as the NMR check. Let \texttt{R} be the clause that is selected. If the execution of \texttt{R} succeeds, then mark all clauses in \texttt{P}$^\prime$ that are in $\Phi$(\texttt{R}) and could be considered a grounding of $\sigma$(\texttt{R}).
		\item After \texttt{M} is computed: Create a new program P$^{\prime\prime}$ from \texttt{P}$^\prime$ by:
		\begin{itemize}
			\item removing all clauses \texttt{R} from \texttt{P}$^\prime$ for which the head(\texttt{R}) $\not\in$ \texttt{M}$^\prime$ and \texttt{R} is not marked, and
			\item transform all clauses \texttt{R} in \texttt{P}$^\prime$ for which not head(\texttt{R}) $\in$ \texttt{M}$^\prime$ by removing all body goals for which nether they nor their negations are in \texttt{M}.
		\end{itemize}
		\item P$^{\prime\prime}$ is the result of \textbf{trimming} \texttt{P}$^\prime$ with respect to \texttt{M}$^\prime$.
	\end{enumerate}
\end{definition}

\paragraph{Forall}

\begin{lemma}\label{forall}
	Let \texttt{P} be a s(ASP) program, \texttt{G} be an atom, and \texttt{X} an unconstrained variable in \texttt{G}. Let $\mathcal{G}$ be the set of all goals obtained by grounding \texttt{X} in \texttt{G} over the s(ASP) universe. Then, a \texttt{forall(\texttt{X}, \texttt{G})} in \texttt{P} succeeds if and only if all goals in $\mathcal{G}$ succeed.
\end{lemma}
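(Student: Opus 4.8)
The plan is to prove both directions by analysing the operational behaviour of \texttt{forall(X, G)} given in Section~\ref{sec:conneg}, leaning on the soundness and completeness of constructive unification and disunification (Section~\ref{sec:unification}), on the soundness of s(ASP) over groundings of $U_S$ (Theorem~\ref{th:groundsound}), and on the variable properties of Proposition~\ref{th:varproperties}. Throughout I write $L = \{c_1,\dots,c_k\}$ for the finite prohibited value list that \texttt{X} may acquire when the first call to \texttt{G} succeeds, and I use the basic fact, which follows from the soundness of constrained derivations, that a successful run of \texttt{G} in which \texttt{X} is negatively constrained against $L$ witnesses the success of $\mathtt{G}[\mathtt{X}/t]$ for every $t \in U_S$ with $t \notin L$, while a success with \texttt{X} left unbound witnesses $\mathtt{G}[\mathtt{X}/t]$ for every $t$.

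First I would dispatch the forward direction. Suppose \texttt{forall(X, G)} succeeds; by the operational semantics this happens in one of two ways. Either the first call to \texttt{G} succeeds with \texttt{X} still unbound, in which case no step of the derivation inspected \texttt{X} and the fact just stated immediately gives success of every goal in $\mathcal{G}$; or \texttt{G} succeeds with \texttt{X} constrained against a finite list $L$ and the mechanism subsequently re-runs $\mathtt{G}[\mathtt{X}/c_i]$ for each $c_i \in L$, all succeeding. In the latter case the constrained run covers every grounding outside $L$ and the explicit re-runs cover the groundings in $L$, so again every goal in $\mathcal{G}$ succeeds.

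Next I would treat the converse. Assume every goal in $\mathcal{G}$ succeeds and run \texttt{forall(X, G)}, which first calls \texttt{G} with \texttt{X} unbound. The crux is the claim that this computation can reach a success in which \texttt{X} is either unbound or constrained against a finite list $L$. Granting the claim, the unbound case makes the \texttt{forall} succeed at once, while in the constrained case each re-run $\mathtt{G}[\mathtt{X}/c_i]$ is a goal of $\mathcal{G}$ and hence succeeds by hypothesis, so all re-runs succeed and the \texttt{forall} succeeds.

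The hard part will be justifying this claim, and it is exactly the step that forces the infinite universe of Section~\ref{sec:universe} into play. By the conservative-extension construction behind Definition~\ref{def:universe}, $U_S$ contains infinitely many special constants that do not occur in the finite program \texttt{P} and are mutually indistinguishable to it; since $\mathtt{G}[\mathtt{X}/c]$ succeeds for every such constant $c$ by hypothesis, the derivation establishing these successes cannot depend on which constant is substituted, and so it lifts to a derivation of \texttt{G} that leaves \texttt{X} unbound, or at most negatively constrained against the finitely many ground terms actually appearing in \texttt{P}. This genericity argument is what delivers the co-finite derivation the claim requires, and Proposition~\ref{th:varproperties} is what guarantees such a derivation is admissible, since \texttt{X} can never be constrained against all of $U_S$. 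I expect formalising this lifting---that success on the infinitely many generic constants must factor through a single unbound or co-finitely-constrained derivation rather than demanding that \texttt{X} be bound afresh each time---to be the main obstacle, as it is the one place where finiteness of \texttt{P} and infinitude of $U_S$ do essential work rather than routine unfolding of the operational rules.
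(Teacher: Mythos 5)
Your proposal follows essentially the same route as the paper's proof: the forward direction rests on grounding the proof tree of the unbound or negatively constrained success of \texttt{G} to witness each instance (with the explicit re-runs covering the prohibited values), and the converse rests on the infinitude of the s(ASP) universe---terms not occurring in \texttt{P}---forcing a successful derivation that does not ground \texttt{X}. The paper merely packages both directions as a single proof by contradiction and is, if anything, terser than you are on the lifting step you flag as the main obstacle, so your plan is faithful to the paper's argument.
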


\begin{proof}
	Assume the opposite is true. That is, either \texttt{forall(\texttt{X}, \texttt{G})} succeeds and some \texttt{L} $\in$ $\mathcal{G}$ fails, or all goals in $\mathcal{G}$ succeed, but \texttt{forall(\texttt{X}, \texttt{G})} fails.
	\begin{description}
		\item[Case 1:] Suppose \texttt{forall(\texttt{X}, \texttt{G})} succeeds, but there exists some \texttt{L} $\in$ $\mathcal{G}$ such that \texttt{L} fails. There are two phases for the forall to succeed. First, \texttt{G} must succeed with \texttt{X} unbound. Then prove \texttt{G} with \texttt{X} grounded with each of its constraints. Since \texttt{L} fails, the value corresponding to \texttt{X} in \texttt{L} could not have been in the constraint list. Otherwise, the forall would have failed in the second phase. However, we could take the proof tree generated by the first phase, and ground \texttt{X} to obtain a proof tree for \texttt{L}, meaning there is a way for \texttt{L} to succeed. A contradiction.
		\item[Case 2:] Suppose \texttt{forall(\texttt{X}, \texttt{G})} fails, but all \texttt{L} $\in$ $\mathcal{G}$ succeed. We know that the forall could not have failed in the second phase, otherwise there would be some \texttt{L} $\in$ $\mathcal{G}$ such that \texttt{L} fails. Therefore there are two possibilities. Either, there is no way for \texttt{G} to succeed or all ways require \texttt{X} to be ground. The second case cannot be the case since all \texttt{L} $\in$ $\mathcal{G}$ succeed, and by definition, the s(ASP) universe contains an infinite number of terms that do not appear in the herbrand universe of \texttt{P}. Thus there exists some term in the s(ASP) universe for which \texttt{X} cannot be explicitly grounded against. The first case also cannot be true since all \texttt{L} $\in$ $\mathcal{G}$ succeed, thus there must be a way for \texttt{G} to succeed. A contradiction.
	\end{description}
	
	Therefore, \texttt{forall(\texttt{X}, \texttt{G})} succeeds if and only if all \texttt{L} $\in$ $\mathcal{G}$ succeeds.
\end{proof}

\paragraph{Constructive Coinductive Failure}

\begin{lemma}\label{ccf}
	Let \texttt{P} be a program, \texttt{G} a goal currently in the CHS, and \texttt{G}$^\prime$ be a goal that unifies with the negation of \texttt{G} that we wish to prove. Let \texttt{G}$^{\prime\prime}$ be the goal generated by the constructive coinductive failure algorithm from \texttt{G} and \texttt{G}$^\prime$. Then, \texttt{G} does not unify with \texttt{G}$^{\prime\prime}$.
\end{lemma}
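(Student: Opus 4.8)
The plan is to reduce the claim to a structural property of constructive unification and then verify, branch by branch, that the coinductive failure algorithm enforces it. First I would fix notation, writing the atoms of $G$ and $G'$ as $p(s_1,\dots,s_n)$ and $p(t_1,\dots,t_n)$; they share functor and arity because $G'$ unifies with $\neg G$ only if the underlying atoms unify. I would also read the conclusion ``$G$ does not unify with $G''$'' as the operational goal stated in Section~\ref{sec:cosld}, namely that the negation of the constrained goal $G''$ no longer unifies with the CHS entry $G$, equivalently that $\mathrm{atom}(G)$ and $\mathrm{atom}(G'')$ do not constructively unify. The whole argument then rests on one sub-claim extracted from the definition of constructive unification in Section~\ref{sec:unification}: two atoms $p(s_1,\dots,s_n)$ and $p(t_1,\dots,t_n)$ constructively unify only if \emph{every} pair $s_k, t_k$ constructively unifies. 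Hence it suffices to exhibit, for the $G''$ the algorithm returns, a single position $k$ at which the constrained argument $t_k''$ fails to unify with $s_k$.

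Next I would unfold the coinductive failure check of Section~\ref{sec:cosld}. It processes argument positions first to last and, along any branch that succeeds, commits to exactly one position $k$ where it imposes a blocking constraint on $G'$; backtracking merely moves to another position or value. I would therefore argue that on each successful branch the imposed constraint makes $t_k''$ incompatible with $s_k$, by induction on term depth with a case split on the form of $s_k$: (i) if $s_k$ is ground or bound, the algorithm disunifies $t_k$ against $s_k$, and soundness of constructive disunification gives $t_k'' \not\sim s_k$; (ii) if $s_k$ is a negatively constrained variable, the algorithm binds $t_k$ to a value $v$ in $s_k$'s prohibited list, whereupon $t_k''$ is the non-variable value $v$ which, being in that list, fails to unify with $s_k$ by the first clause of the constructive-unification definition; (iii) if $s_k$ is compound, the check recurses on its arguments and the same dichotomy applies one level down, supplying the inductive step.

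Combining the sub-claim with the case analysis yields the result: since some position $k$ is blocked, $\mathrm{atom}(G'')$ cannot constructively unify with $\mathrm{atom}(G)$, and because $G''$ inherits from $G'$ the opposite negation status to $G$, the negation of $G''$ fails to unify with $G$, exactly as required. I would close by noting that no information is lost across the non-deterministic branches: each branch blocks a genuine sub-case, and together they cover all ways of avoiding unification with $G$.

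The step I expect to be the main obstacle is case (ii). The delicacy is that $s_k$ and $t_k$ are both variables, so one must avoid the illegal disunification of two negatively constrained variables (Definition~\ref{def:restrictions}) and the fact that two such variables always unify. I would lean on the observation that the algorithm sidesteps both by \emph{binding} $t_k$ to a concrete prohibited value rather than disunifying, turning it into a non-variable argument that provably cannot unify with $s_k$, and on Proposition~\ref{th:varproperties} to guarantee that such a value exists and that none of the domains involved collapse to empty. A secondary subtlety is confirming that the ``at least one blocked position'' property holds on \emph{every} successful branch rather than merely on one; I would obtain this directly from the operational reading that the check reports success only after committing to a blocking constraint.
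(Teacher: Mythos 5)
Your proposal is correct and follows essentially the same route as the paper's proof: both reduce the claim to blocking a single argument position, then induct on term depth with a case analysis on the forms of the corresponding arguments (constant, negatively constrained variable, compound), including the key observation that when both arguments are variables the algorithm grounds $t_k$ to a value from $s_k$'s prohibited list rather than performing an illegal disunification. The only differences are cosmetic — the paper organizes the cases around the argument from $G'$ rather than from $G$, and explicitly notes the loop-variable and unconstrained-variable subcases — so no further comparison is needed.
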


\begin{proof}
	Without loss of generality, assume \texttt{G} is negated. So, \texttt{G}$^\prime$ will not be.
	
	If we assume that given an argument in \texttt{G}$^\prime$ we can restrict it so that it does not unify with the corresponding argument in \texttt{G}, we can easily see that if the algorithm succeeds \texttt{G}$^{\prime\prime}$ cannot unify with atom(\texttt{G}). So, we must show that if we restrict an argument (the algorithm succeeds for the argument) in \texttt{G}$^\prime$ it is never the case that it unifies with the corresponding argument in \texttt{G}. 
	
	We will show this by inducting over the depth of the term. The depth of a variable or constant is zero, and the depth of a list or function is one more than the maximal depth of all its arguments. Let \texttt{T}$^\prime$ be the argument from \texttt{G}$^\prime$ and \texttt{T} the corresponding argument from \texttt{G}. Notice that \texttt{T} and \texttt{T}$^\prime$ must unify. If \texttt{T} is a loop variable, then \texttt{T}$^\prime$ will be added to its constraint list, even if \texttt{T}$^\prime$ is a variable. In this case it is obvious that \texttt{T} and \texttt{T}$^\prime$ no longer unify. If \texttt{T} is an unconstrained variable and not a loop variable, the algorithm will always fail(since it unifies with everything), so we will not explicitly consider this case below.
	
	\begin{description}
		\item[Base Case:] Suppose \texttt{T}$^\prime$ has a depth of zero. If it is a constant the algorithm fails, so we only need to consider the case it is a variable. The behavior of the algorithm depends on what \texttt{T} is.
		\begin{itemize}
			\item If \texttt{T} is a constant, then \texttt{T}$^\prime$ is constrained against it. It is obvious in this case that they do not unify.
			\item If \texttt{T} is a constrained variable, then a term is nondeterministically selected from its constraint list for which \texttt{T}$^\prime$ is not constrained against, and is used to ground \texttt{T}$^\prime$. Again, it is obvious they do not unify.
			\item If \texttt{T} is a list or function, \texttt{T} will be added to the constraint list for \texttt{T}$^\prime$, and therefore no longer unify with \texttt{T}$^\prime$.
		\end{itemize}
		\item[Inductive Hypothesis:] Let \texttt{T}$_2^\prime$ and \texttt{T}$_2$ be terms that unify, with \texttt{T}$_2^\prime$ having a depth less than or equal to \texttt{k} and \texttt{T}$_2$ being from the goal in the CHS. Assume that if the algorithm succeeds the goal generated will not unify with \texttt{T}$_2$.
		\item[Inductive Step:] Suppose \texttt{T}$^\prime$ has a depth of \texttt{k} + 1. Then, \texttt{T}$^\prime$ must be a list or function, and since \texttt{T} unifies with \texttt{T}$^\prime$ it must also be a list or a function with the same functor and arity. So, we nondeterministically select an argument in \texttt{T}$^\prime$ and apply the algorithm to it with the corresponding argument in \texttt{T}. That argument must have depth of \texttt{k} or less, and by the inductive hypothesis if the algorithm succeeds then it cannot unify with the corresponding argument in \texttt{T}, and therefore by replacing the argument in \texttt{T}$^\prime$ with the result, we know that the new term cannot unify with \texttt{T}. If no argument can succeed then the algorithm will fail, so such a case can be ignored.
	\end{description}
	
	Thus, by induction if the algorithm succeeds the resulting term will not unify with the term in the CHS, and therefore \texttt{G}$^{\prime\prime}$ will never unify with atom(\texttt{G}).
\end{proof}

\paragraph{Soundness Theorem}

\begin{theorem} \label{model_subset}
	Let \texttt{P} be a program, and \texttt{M} a partial model of \texttt{P} generated by the s(ASP) algorithm. Let \texttt{M}$_2$ and \texttt{P}$_2$ be the results of grounding \texttt{M} and \texttt{P}, respectively, over the s(ASP) universe. There exists a stable model \texttt{X} of \texttt{P}$_2$ such that for all literals \texttt{L} in \texttt{M}$_2$, \texttt{L} is in \texttt{X}, and for all literals \texttt{L} with \texttt{not L} in \texttt{M}$_2$, \texttt{L} is not in \texttt{X}.
\end{theorem}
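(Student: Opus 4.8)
The plan is to prove this soundness theorem by constructing the desired stable model $X$ of $P_2$ as the union $M_2 \cup Y$, where $Y$ is a stable model of a suitably transformed ``top'' program, and then verifying the two membership conditions. The overall strategy follows the roadmap laid out in the soundness overview: trim the grounded program, split it using the splitting theorem, show that $M_2$ is a stable model of the bottom, and lift this to the whole program.

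First I would apply the trimming algorithm to $P_2$ with respect to $M_2$ to obtain a trimmed program $P''$. Using Lemma \ref{unneeded_clauses}, the clauses whose heads are not in $M_2$ and are unmarked can be removed without destroying any stable model containing the relevant heads; using Lemma \ref{unneeded_goals}, the body goals that are neither true nor false in $M$ can be stripped from clauses whose heads are not in $M_2$, again preserving stable models. Together these justify that any stable model of $P''$ extends to a stable model of $P_2$ agreeing with it on the trimmed portion. Next I would choose the splitting set $U$ to be the set of ground atoms touched by the s(ASP) derivation of $M$ (the atoms appearing in $M_2$ together with those whose negation appears, closed appropriately so that $U$ is genuinely a splitting set), so that the bottom $b_U(P'')$ consists exactly of the clauses the algorithm actually used to prove the partial model.

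The central step is to show that $M_2$ (restricted to $U$) is itself a stable model of $b_U(P'')$. Here I would argue that the s(ASP) execution, read off the call stack and CHS, directly exhibits the Gelfond--Lifschitz reduct computation: every positive literal in $M_2$ has a supporting clause whose positive body literals are in $M_2$ and whose negative body literals are excluded (enforced by coinductive failure and the CHS-consistency check), so $M_2$ is closed and supported; conversely, positive-loop detection guarantees $M_2$ is minimal, i.e.\ no literal is in $M_2$ without a genuine (non-circular) derivation. Lemma \ref{ccf} guarantees that the constructive coinductive failure step never wrongly admits a literal whose negation is already committed, so the candidate set is genuinely consistent, and Lemma \ref{forall} guarantees that the universally-quantified sub-checks and dual-rule forall goals behave as true universal quantification over the infinite s(ASP) universe. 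I would then invoke the splitting theorem: since $M_2 \cap U$ is a stable model of the bottom, and since the NMR check succeeded, the top program $e_U(P'' \setminus b_U(P''), M_2)$ has a stable model $Y$ (the NMR success is precisely what rules out the ``no stable model for the top'' alternative of the splitting theorem), and $X = (M_2 \cap U) \cup Y$ is a stable model of $P''$, hence of $P_2$ after reversing the trimming.

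The main obstacle I anticipate is the third step: rigorously showing that $M_2$ is a stable model of the bottom, because this is where the non-Herbrand universe, the treatment of negatively constrained and loop variables, and the goal-directed (partial) nature of the computation all interact. In particular, verifying minimality requires that positive-loop failure in the call stack exactly corresponds to the absence of spurious support in the least fixed point of the reduct, and verifying closure/support requires that every literal the algorithm ``coinductively succeeds'' on via an even loop is in fact justified in the completed model rather than merely assumed. The constructive-negation machinery means the correspondence between a CHS entry with a nonempty prohibited-value list and a set of ground literals of $M_2$ must be handled carefully; I expect the cleanest route is to reason about the grounding $M_2$ directly and to use Lemmas \ref{forall} and \ref{ccf} to discharge exactly the cases (foralls and negated-goal consistency) where the infinite universe and constraint lists would otherwise threaten the argument.
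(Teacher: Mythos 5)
Your proposal follows essentially the same route as the paper's proof: trim the grounded program using Lemmas \ref{unneeded_clauses} and \ref{unneeded_goals}, split on the set of atoms determined by \texttt{M}$_2$, show \texttt{M}$_2$ is a stable model of the bottom (support for positive literals, non-derivability for negated ones, with Lemmas \ref{ccf} and \ref{forall} discharging consistency and the universal quantification), argue the top has a stable model because the NMR check eliminates odd loops, and combine via the splitting theorem. The obstacle you flag---rigorously establishing that \texttt{M}$_2$ is a stable model of the bottom---is exactly where the paper spends its effort, handling the two directions by a proof-tree argument for literals in \texttt{M}$_2$ and an induction on clause level for literals whose negations are in \texttt{M}$_2$.
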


\begin{proof}[Proof of Theorem \ref{model_subset}]
	Let \texttt{P} be a program, and \texttt{M} a partial model of \texttt{P} generated by the s(ASP) algorithm. Let \texttt{M}$_2$ and \texttt{P}$_2$ be the results of grounding \texttt{M} and \texttt{P}, respectively, over the s(ASP) universe. If \texttt{M} contains loop variables then we may choose a domain for each loop variable that does not contradict the rest of \texttt{M} without loss of generality. This is just selecting one out of the infinite number of partial models represented by \texttt{M}. Assume that there exists at least one assignment that contains no empty variables since we consider such a situation as a failure.
	
	Before proving our claim we must show that \texttt{M}$_2$ is consistent. That is, for some literal \texttt{L} it is not the case that \texttt{L} and \texttt{not L} are both in \texttt{M}$_2$. First, notice that if the value of \texttt{L} depends on \texttt{not L} (and visa versa) then the s(ASP) algorithm will fail or the goal will be constrained so that \texttt{L} and \texttt{not L} will not be in the grounding. This is because it is an odd cycle over negation. So we only need to consider the case where \texttt{G} $\in$ goal(\texttt{L}) unifies with something in \texttt{M}, but we want to prove a goal that unifies with the negation of \texttt{G}. However, by lemma \ref{ccf} we know that the second goal will be restricted so that it no longer unifies with \texttt{L} or \texttt{not L}. So, \texttt{M}$_2$ is consistent.
	
	Let \texttt{P}$_3$ be the result of trimming \texttt{P}$_2$ with respect to \texttt{M}$_2$, and  \texttt{S} be a set of literals such that \texttt{L} $\in$ \texttt{S} $\iff$ \texttt{L} $\in$ \texttt{M}$_2$ $\vee$ \texttt{not L} $\in$ \texttt{M}$_2$. \texttt{S} is a splitting set of \texttt{P}$_3$. Now we must do two things. First we must show that \texttt{M}$_2$ is a stable model of the bottom, and that there exists a stable model for the top.
	
	To prove that \texttt{M}$_2$ is a stable model of the bottom we must show that:
	\begin{enumerate}
		\item All literals in \texttt{M}$_2$ are in the least model of the reduct for \texttt{P}$_3$, and 
		\item No literal \texttt{L} with \texttt{not L} in \texttt{M}$_2$ will be in it. 
	\end{enumerate}
	
	\begin{description}
		\item[Case 1:] For all literals \texttt{L} $\in$ \texttt{M}$_2$: Let \texttt{L}$^\prime$ be the non-ground atom in \texttt{M} that is used to generate \texttt{L}, and \texttt{R} be the clause in \texttt{P} that is used to prove \texttt{L}$^\prime$. We can construct a tree by using \texttt{L}$^\prime$ as the root, and the body literals from \texttt{R} as the children. The negated goals will be in \texttt{M} and later removed from the clause when computing the reduct. So, we can ignore them and only consider literals as children. Additionally, we will keep the groundings and constraints of the variables at the time of success. Then ground the tree such that the resulting tree has \texttt{L} as the root. This corresponds to a clause in \texttt{P}$_3$, since it would have been marked and therefore not removed. The leaves of such a tree must have facts in the reduct of \texttt{P}$_3$, and therefore will be in the least model. From there we know that the root of each level going up the tree will be in the least model, including \texttt{L}.
		\item[Case 2:] For all literals \texttt{L} such that \texttt{not L} $\in$ \texttt{M}$_2$ we must show that there is no way \texttt{L} can be in the least model of the reduct for \texttt{P}$_3$. Firstly, if a clause with \texttt{L} as the head is part of a positive cycle for \texttt{L}, then it cannot be used to put \texttt{L} into the least model. So, we only need to consider noncyclic cases. For these cases we will prove it inductively, and to do that we will define the \textbf{level} of a clause. If a clause is a fact then it has level zero. For non-fact clauses, we say that a body goal \texttt{B} has a level equal to that of the highest level clause with atom(\texttt{B}) as the head. The level of all non-facts is one plus the highest level of the body literals. In the case of a body literal for which there are no clauses, it is considered level zero.
		\begin{description}
			\item[Base Case:] Let \texttt{L} be a literal such that \texttt{not L} $\in$ \texttt{M}$_2$. There cannot be a fact for \texttt{L}, otherwise \texttt{not L} could not be in \texttt{M}$_2$. The goal \texttt{not L} comes from the success of a dual rule, which would always fail if a fact for L existed. If \texttt{L} has no clauses, then it cannot be in the least model.
			\item[Inductive Hypothesis:] Let \texttt{L} be a literal such that \texttt{not L} $\in$ \texttt{M}$_2$. Suppose all clauses with a level less than or equal to \texttt{k} cannot be used to place \texttt{L} in the least model.
			\item[Inductive Step:] Let \texttt{L} be a literal such that \texttt{not L} $\in$ \texttt{M}$_2$. Let \texttt{R} be a clause with \texttt{L} in the head and a level of \texttt{k} + 1. In order for the dual to succeed and allow \texttt{not L} to be in the grounding there must be a goal \texttt{G} in \texttt{R} such that \texttt{G} $\not\in$ \texttt{M}$_2$. Since \texttt{G} $\not\in$ \texttt{M}$_2$ but was not trimmed from \texttt{R} the negation of \texttt{G} must be in \texttt{M}$_2$. If \texttt{G} is negated then \texttt{R} would be removed when computing the reduct. So, we only need to consider the case \texttt{G} is not negated. \texttt{G} must have a level of at most \texttt{k}, and by the inductive hypothesis we know that there is no way to place \texttt{G} into the least model of the reduct, and therefore \texttt{R} cannot be used to place \texttt{L} into the least model.
		\end{description}
		Thus, by induction \texttt{L} is not in the least model of the reduct for \texttt{P}$_3$.
	\end{description}
	Therefore, \texttt{M}$_2$ is a stable model of the bottom of \texttt{P}$_3$ with respect to the splitting set \texttt{S}.
	
	Now we must show that there exists a stable model for the top. We will do this by observing that the only way for there not to be a stable model is if there is an inconsistency, and there can be an inconsistency only if there is an odd cycle. So, we will show that the modified program from the top will contain no odd cycles. 
	
	Let \texttt{R} be an OLON in \texttt{P}$_2$, and \texttt{R}$^\prime$ the clause in \texttt{P} such that when grounding \texttt{R}$^\prime$ over the s(ASP) universe, \texttt{R} is generated. Since \texttt{R} is part of an odd cycle, \texttt{R}$^\prime$ is also considered part of an odd cycle since we only look at predicate name and arity. Thus there will be a NMR check for \texttt{R}$^\prime$. By lemma \ref{forall}, we can treat the NMR check as a conjunction of checks with the head grounded over the s(ASP) universe. So, either there exists a body literal in \texttt{R} with its negation in \texttt{M}$_2$ or the head of \texttt{R} is in \texttt{M}$_2$. In the second case, \texttt{R} will either be removed through trimming or will be in the bottom of \texttt{P}$_3$. For the first case, assume \texttt{R} is not removed through trimming or in the bottom of \texttt{P}$_3$. Then \texttt{R} will be removed when computing the partial evaluation for the top since the negation of some literal in the body is in \texttt{M}$_2$. Thus, there are no odd cycles when computing the answer sets of the top.
	
	It is apparent from the splitting theorem that if \texttt{L} $\in$ \texttt{M}$_2$ is a literal then \texttt{L} is in \texttt{X}. So, we only need to show that if \texttt{not L} $\in$ \texttt{M}$_2$ then \texttt{L} $\not\in$ \texttt{X}. First, notice that the truth value of \texttt{L} is determined by the bottom of \texttt{P}$_3$ with respect to \texttt{S}, and cannot be in the stable model of the top. Thus, \texttt{L} cannot be in \texttt{X}.
	
	By lemma \ref{unneeded_clauses}, we know that a stable model for \texttt{P}$_3$ is also a stable model of \texttt{P}$_2$.
\end{proof}

\subsubsection{Completeness}

While s(ASP) is sound for the set of all legal programs, completeness for this 
set is impossible. While we have striven to make s(ASP) complete for the 
largest class of programs possible, we leave the precise definition of this 
class and the associated proofs to future work.\footnote{The exception being 
finite, legal, grounded programs, for which we have proven completeness in 
Section \ref{sec:groundcorrect}.} Instead, we argue that the utility of our 
method outweighs its lack of completeness.

It is easy enough to show that s(ASP) cannot possibly be complete for all legal 
programs. Consider the class of \textit{stratified programs}, that is, programs 
with no loops over negation. A stratified program will always have a unique 
stable model which coincides with its \textit{perfect model}, the model 
produced by the perfect model semantics \cite{cadoli93}. However, the perfect 
model of such a program may be incomputable \cite{apt1990}. Therefore, even 
though a stable model must exist for such a program, we may be unable to 
compute it. As such, it is not possible to guarantee completeness for such 
programs.

The trade-off for this loss of completeness is a massive increase in expressive 
power. The propositional stable model semantics can only express relations 
which are co-NP, however, the predicate stable model semantics and s(ASP) can 
express relations which are $\Pi^1_1$ \cite{schlipf1990,cadoli93}.

In addition to increased computational expressiveness, s(ASP) supports lists, 
complex data structures and real numbers, providing programmers with tools not 
found in any other implementation of the stable model semantics. With these 
features, even programs which can already be expressed in the propositional 
semantics may be easier to write in s(ASP).

Aside from completeness itself, the only ``desirable'' property that we lose 
compared to other implementations of the stable model semantics is the 
guarantee that a program will always terminate. However, guaranteed termination 
is a double-edged sword. It implies that only decidable problems can be 
encoded, as, by definition, this guarantee cannot be applied to semi-decidable 
or undecidable problems. By abandoning guaranteed termination, we are able to 
support programs which encode such problems, something that no other 
implementation of the stable model semantics can claim. This is significant, as 
problems which are undecidable in general may still produce useful results for 
some cases.

Thus, we trade completeness for superior functionality and the ability to 
encode problems which no other implementation of the stable model semantics can 
handle. While completeness is certainly desirable, it is our firm belief that 
the gains derived from this trade significantly outweigh the losses.

\section{Implementation and Examples} \label{sec:imp}

A fully functional prototype implementation of the method presented here has 
been created, also using the name s(ASP). The implementation is written in 
Prolog and totals about 4,300 lines of code (excluding comments and blank 
lines). An open source release is available at \cite{saspweb}.

Unlike its predecessor, Galliwasp, s(ASP) is completely self-contained: neither 
a grounder nor a separate compiler is required. As with our method, the 
prototype will accept any legal normal logic program and execute it 
\textit{without grounding any portion of the program at any stage}. While the 
prototype is not designed to be competitive in terms of speed, our method 
allows it to offer features not found in any other implementation of the stable 
model semantics, including answer set programming systems. In the following 
subsections, we will look at how s(ASP) behaves with a number of examples.

\subsection{Example: N Queens with Lists} \label{sec:example}

A variant of the N queens problem using lists, can be found in Figure 
\ref{fig:queens}. This example is of particular interest, as it has no finite 
grounding and thus cannot be run by other implementations of the stable model 
semantics. Additionally, the even loop in the last two lines of the code will 
produce two loop variables, discussed in Section \ref{sec:even}.

\begin{figure}
	\figrule
	\begin{center}
		\begin{verbatim}
		% solve the N queens problem for a given N, returning a list of queens as Q
		nqueens(N, Q) :-
		    nqueens(N, N, [], Q).
		
		% pick queens one at a time and test against all previous queens
		nqueens(X, N, Qi, Qo) :-
		    X > 0,
		    pickqueen(X, Y, N),
		    not attack(X, Y, Qi),
		    X1 is X - 1,
		    nqueens(X1, N, [q(X, Y) | Qi], Qo).
		nqueens(0, _, Q, Q).
		
		% pick a queen for row X.
		pickqueen(X, Y, Y) :-
		    Y > 0,
		    q(X, Y).
		pickqueen(X, Y, N) :-
		    N > 1,
		    N1 is N - 1,
		    pickqueen(X, Y, N1).
		
		% check if a queen can attack any previously selected queen
		attack(X, _, [q(X, _) | _]). % same row
		attack(_, Y, [q(_, Y) | _]). % same col
		attack(X, Y, [q(X2, Y2) | _]) :- % same diagonal
		    Xd is X2 - X, abs(Xd, Xd2),
		    Yd is Y2 - Y, abs(Yd, Yd2),
		    Xd2 = Yd2.
		attack(X, Y, [_ | T]) :-
		    attack(X, Y, T).
		
		q(X, Y) :- not negq(X, Y).
		negq(X, Y) :- not q(X, Y).
		
		abs(X, X) :- X >= 0.
		abs(X, Y) :- X < 0, Y is X * -1.
		\end{verbatim}
	\end{center}
	\caption{N Queens Program with Lists}
	\label{fig:queens}
	\figrule
\end{figure}

When executed by our prototype implementation the user will get the following:
\begin{verbatim}
?- nqueens(5,X).
{ nqueens(5,[q(1,2),q(2,4),q(3,1),q(4,3),q(5,5)]), q(1,2), q(2,4),
q(3,1), q(4,3), q(5,5) }
X = [q(1,2),q(2,4),q(3,1),q(4,3),q(5,5)].

?- nqueens(4,X).
{ nqueens(4,[q(1,2),q(2,4),q(3,1),q(4,3)]), q(1,2), q(2,4), q(3,1),
q(4,3) }
X = [q(1,2),q(2,4),q(3,1),q(4,3)];
{ nqueens(4,[q(1,3),q(2,1),q(3,4),q(4,2)]), q(1,3), q(2,1), q(3,4),
q(4,2) }
X = [q(1,3),q(2,1),q(3,4),q(4,2)];
false.
\end{verbatim}

As no grounding is performed, multiple instances of the problem can be queried 
in a single session. The sample output illustrates this by querying both four 
and five queens. While the entire partial stable model is provided, variables 
in the query are printed, making desired information much easier to find. In 
the above example, \texttt{X} will be bound to the list of queens selected. As 
with Prolog interpreters, `;' and `.' can be used to reject or accept a 
solution, respectively. As there are only two solutions for four queens, 
pressing `;' a second time leads to failure. Note that the output will often 
contain variables with names consisting of ``Var'' followed by an integer. This 
is simply because s(ASP) renames variables to ensure that they are unique.

\subsection{Example: Hamiltonian Cycle Detection}

\begin{figure}
	\figrule
	\begin{center}
		\begin{verbatim}
		reachable(V) :- chosen(U, V), reachable(U).
		reachable(0) :- chosen(V, 0).
		
		% Every vertex must be reachable.
		:- vertex(U), not reachable(U).
		
		% Choose exactly one edge from each vertex.
		other(U, V) :-
		    vertex(U), vertex(V), vertex(W),
		    V \= W, chosen(U, W).
		chosen(U, V) :-
		    vertex(U), vertex(V),
		    edge(U, V), not other(U, V).
		
		% Two edges cannot be incident on the same
		% vertex.
		:- chosen(U, W), chosen(V, W), U \= V.
		
		% Sample graph: vertexes and the edges connecting them.
		vertex(0).
		vertex(1).
		vertex(2).
		vertex(3).
		vertex(4).
		
		edge(0, 1).
		edge(1, 2).
		edge(2, 3).
		edge(3, 4).
		edge(4, 0).
		edge(4, 1).
		edge(4, 2).
		edge(4, 3).
		\end{verbatim}
	\end{center}
	\caption{A program for Hamiltonian cycle detection with a simple graph 
		included.}
	\label{fig:hamiltonian}
	\figrule
\end{figure}

Figure \ref{fig:hamiltonian} contains an encoding of the Hamiltonian cycle 
problem, along with a simple graph. The results of this example provide an 
interesting look at our use of negatively constrained variables in output. 
While another solution exists, due to space limitations, only the first is 
provided. The cycle is represented by the \texttt{chosen/2} elements at the 
beginning of the set:
\begin{verbatim}
?- reachable(0).
{ chosen(0,1), chosen(1,2), chosen(2,3), chosen(3,4), chosen(4,0), 
edge(0,1), edge(1,2), edge(2,3), edge(3,4), edge(4,0), edge(4,1), 
edge(4,2), edge(4,3), other(0,0), other(0,2), other(0,3), 
other(0,4), other(1,0), other(1,1), other(1,3), other(1,4), 
other(2,0), other(2,1), other(2,2), other(2,4), other(3,0), 
other(3,1), other(3,2), other(3,3), other(4,1), other(4,2), 
other(4,3), other(4,4), reachable(0), reachable(1), reachable(2), 
reachable(3), reachable(4), vertex(0), vertex(1), vertex(2), 
vertex(3), vertex(4), not chosen(0,0), not chosen(0,2), not 
chosen(0,3), not chosen(0,4), not chosen(0,Var644) ( Var644 \= 0, 
Var644 \= 1, Var644 \= 2, Var644 \= 3, Var644 \= 4 ), not 
chosen(1,0), not chosen(1,1), not chosen(1,3), not chosen(1,4), not 
chosen(1,Var710) ( Var710 \= 0, Var710 \= 1, Var710 \= 2, Var710 \= 
3, Var710 \= 4 ), not chosen(2,0), not chosen(2,1), not chosen(2,2), 
not chosen(2,4), not chosen(2,Var776) ( Var776 \= 0, Var776 \= 1, 
Var776\= 2, Var776 \= 3, Var776 \= 4 ), not chosen(3,0), not 
chosen(3,1), not chosen(3,2), not chosen(3,3), not chosen(3,Var842) 
( Var842 \= 0, Var842 \= 1, Var842 \= 2, Var842 \= 3, Var842 \= 4 ), 
not chosen(4,1), not chosen(4,2), not chosen(4,3), not chosen(4,4), 
not chosen(4,Var908) ( Var908 \= 0, Var908 \= 1, Var908 \= 2, Var908 
\= 3, Var908 \= 4 ), not chosen(Var627,_) ( Var627 \= 0, Var627 \= 
1, Var627 \= 2, Var627 \= 3, Var627 \= 4 ), not chosen(Var663,1) ( 
Var663 \= 0, Var663 \= 1, Var663 \= 2, Var663 \= 3, Var663 \= 4 ), 
not chosen(Var734,2) ( Var734 \= 0, Var734 \= 1, Var734 \= 2, Var734 
\= 3, Var734 \= 4 ), not chosen(Var805,3) ( Var805 \= 0, Var805 \= 
1, Var805 \= 2, Var805 \= 3, Var805 \= 4 ), not chosen(Var876,4) ( 
Var876 \= 0, Var876 \= 1, Var876 \= 2, Var876 \= 3, Var876 \= 4 ), 
not chosen(Var922,0) ( Var922 \= 0, Var922 \= 1, Var922 \= 2, Var922 
\= 3, Var922 \= 4 ), not edge(0,0), not edge(0,2), not edge(0,3), 
not edge(0,4), not edge(1,0), not edge(1,1), not edge(1,3), not 
edge(1,4), not edge(2,0), not edge(2,1), not edge(2,2), not 
edge(2,4), not edge(3,0), not edge(3,1), not edge(3,2), not 
edge(3,3), not edge(4,4), not other(0,1), not other(1,2), not 
other(2,3), not other(3,4), not other(4,0), not vertex(Var31) ( 
Var31 \= 0, Var31 \= 1, Var31 \= 2, Var31 \= 3, Var31 \= 4 ) } . 
\end{verbatim}

\section{Applications} \label{sec:applications}

The s(ASP) system is publicly available \cite{saspweb}, and has been used 
to develop a number of non-trivial applications based on ASP; it has also been 
used to organize an AI hackathon \cite{hackathon}. Some of these applications 
cannot be executed on traditional ASP systems such as CLASP, as these 
applications make use of lists and structures to represent information. They 
have been developed by people who are not experts in ASP. These applications 
include:

\begin{enumerate}
\item {\bf Degree Audit System:} A system for automatically performing a degree 
audit of a student's undergraduate transcript at a US University, i.e., 
automatically determining whether a student can graduate with a degree or not, 
has been developed using the s(ASP) system \cite{degreeaudit}. The system 
represents the graduation requirements laid out in the course catalog as ASP 
clauses. Use of negation is important for representing these requirements. The 
system has to make use of lists, and has hundreds of courses that appear as 
constants in the program (hence its grounding will produce an inordinately 
large program).

\item {\bf Physician Advisory System:} A system for disease management, 
particularly, for chronic heart failure has been developed using the s(ASP) 
system \cite{chf}. This system automates the 80-page guidelines (that the 
American College of Cardiology has developed) by representing them in ASP. 
While the current system can be run under systems such as CLASP due to the
number of constants not being too large, the final system that models a 
doctor's full knowledge will have quite a few constants, and advanced 
data-structures may be needed.

\item {\bf Automating Textbook Knowledge:} A system that represents 
high-school level knowledge about cells (in the discipline of biology) as
answer set programs has been developed using s(ASP). It can answer high-school 
level questions posed as s(ASP) queries. The goal is to represent the knowledge 
in the entire introductory biology textbook as an answer set program, and then 
be able to automatically answer questions that would be asked of a student (the 
questions have to be translated into ASP queries that are then executed to find 
the answer).

\item {\bf Birthday Gift Advisor:} A recommendation system for birthday gifts 
has also been developed using the s(ASP) system. This system codes a person's 
knowledge about friends, level of friendship, a person's wealth level, 
generosity level, and hobbies as answer set programs. When queried, the system 
can recommend a birthday present for a particular friend (e.g., on one's 
Facebook page). Note that other similar recommendation systems can 
also be built using s(ASP).
\end{enumerate}

\section{Related and Future Work} \label{sec:related}

With respect to related work, most of it focuses on answer set programming 
rather than purely on stable models. Perhaps most notably, DLV supports lists and structures, however, the underlying execution mechanism is still based on grounding and then finding the stable models of the resulting propositional program \cite{dlv-manual,dlv-impl,dlv-sys}. To ensure that the grounded program stays finite, the DLV system resorts to techniques such as \textit{finite domain checking} and requiring that the programmer specify an \textit{upper integer limit} \cite{dlv-manual}, i.e., the maximum numerical value allowed in the program. In contrast, no finite domain checks or upper integer limits are required by the s(ASP) system.

Various other attempts have been made to achieve 
a goal-directed method for executing propositional answer set programs 
\cite{alferes,bonatti,bonatti2,pp05,lmp09,syy04}. However, all of these either 
alter the semantics or significantly restrict the programs accepted, and all 
are restricted to grounded programs.
Our own method will accept any normal logic program without altering the underlying stable model semantics. Similarly, work has 
been done in the area of predicate answer set programming, but only with much 
more severe restrictions on accepted programs \cite{Bonatti2004,Heymans03}. 
Other ASP systems ground ``on the fly'', but grounding is still performed at 
some point during execution \cite{dalpalu09,daotran12,lefevre09a,lefevre09b}.
A variety of efforts have focused on constructive negation 
\cite{chan,stuckey,pearce}, but our method's combination of negatively 
constrained variables and specially adapted dual rules represents a unique 
approach.

There are a vast number of potential routes for future work, including 
practical applications and extensions to the language. In 
particular, the restriction that left recursion cannot lead to success should 
be resolvable through the addition of tabling \cite{swift}. We 
also plan to invest effort on improving our implementation's efficiency. This 
will be achieved by designing a WAM-style abstract machine specialized for our 
method and developing an emulator for it. Normal logic programs can then be 
compiled to this abstract machine and executed using the emulator, much in the 
style of modern Prolog systems.

\section{Conclusion} \label{sec:conclusion}

In this paper, we have presented a method for computing partial stable models 
of predicate normal logic programs and proven it sound for a large class of 
programs. The key to this method is the use of a special, non-Herbrand universe 
which allows us to ensure soundness while still producing useful results for a 
large class of programs. Our method also relies on coinduction and constructive 
negation to execute programs in a top-down manner similar to that used in 
Prolog systems. Compared to similar attempts, our method supports a much larger 
class of programs. Indeed, only three restrictions are placed on input 
programs: that arithmetic operations must be ground when executed, that left 
recursion cannot lead to success, and that two negatively constrained variables 
cannot be disunified with each other. An implementation of our method, s(ASP), 
is freely available \cite{saspweb} and has already been used in the development 
of non-trivial applications.

\section*{Acknowledgment}

We are grateful to Howard Blair for his help and input, particularly on the 
subject of completeness.

\bibliographystyle{acmtrans}
\bibliography{sasp}

\end{document}